\definecolor{uuuuuu}{rgb}{0.26666666666666666,0.26666666666666666,0.26666666666666666}
\definecolor{xdxdff}{rgb}{0.49019607843137253,0.49019607843137253,1.}
\definecolor{ffqqqq}{rgb}{1.,0.,0.}
\definecolor{uuuuuu}{rgb}{0.26666666666666666,0.26666666666666666,0.26666666666666666}
\definecolor{qqwuqq}{rgb}{0.,0.39215686274509803,0.}
\definecolor{zzttqq}{rgb}{0.6,0.2,0.}
\definecolor{xdxdff}{rgb}{0.49019607843137253,0.49019607843137253,1.}
\definecolor{qqqqff}{rgb}{0.,0.,1.}
\definecolor{cqcqcq}{rgb}{0.7529411764705882,0.7529411764705882,0.7529411764705882}
\definecolor{sqsqsq}{rgb}{0.12549019607843137,0.12549019607843137,0.12549019607843137}
\theoremstyle{plain}
\newtheorem{theorem}[subsection]{Theorem}
\newtheorem{lemma}[subsection]{Lemma}
\newtheorem{defi}[subsection]{Definition}
\newtheorem{prop}[subsection]{Proposition}
\theoremstyle{definition}
\newtheorem{remark}[subsection]{Remark}
\newtheorem{note}[subsection]{Note}
\newcommand{\uu}{\cup}% union
\newcommand{\ii}{\cap}% intersection
\newcommand{\UU}{\bigcup}% big union
\newcommand{\II}{\bigcap}% big intersection
\newcommand{\sci}{\subset}% strictly contained in
\newcommand{\es}{\emptyset}% the empty set
\newcommand{\set}[1]{\{#1\}}% set
\newcommand{\ga}{\alpha}
\newcommand{\gb}{\beta}
\newcommand{\gd}{\delta}
\renewcommand{\gg}{\gamma}% old use >>
\newcommand{\go}{\omega}
\newcommand{\gt}{\tau}
\newcommand{\tit}{\textit}% text italic
\newcommand{\C}[1]{\mathcal{#1}}% Euler Script - only caps, use as \C{A}
\newcommand{\D}[1]{\mathbb{#1}}% Doubled - blackboard bold - only caps, uas as \D{A}
\newcommand{\te}{\text}% same as \mathrm command.
\newcommand{\nd}{\noindent}
\newcommand{\tri}{\triangle}
\begin{document}
\nd To appear, Discrete and Continuous Dynamical Systems - Series S (DCDS-S)
\title{Optimal quantization for a probability measure on a nonuniform stretched Sierpi\'{n}ski triangle}

\author[Megha Pandey and Mrinal Kanti Roychowdhury]{Megha Pandey and Mrinal Kanti Roychowdhury}

% 2020 MSC numbers are required.
\subjclass{60E05, 28A80, 94A34.}
% Please provide a minimum of 5 keywords or phrases.
\keywords{Optimal quantizers, quantization error, probability distribution, stretched Sierpi\'{n}ski triangle.}

% Put your short thanks below. For thanks/acknowledgments over 30 words, please place them in \section*{Acknowledgments} located above the reference section.
% Remove \thanks{The first author is supported by NSF grant xx-xxxx} if nothing is added here.
%\thanks{The first author is supported by [insert grant information here]}

% Add corresponding author in the footnote of the first page if necessary.
% Add $^*$ adjacent to the corresponding author's name on the first page if necessary.
% The corresponding author in your article should match the corresponding author listed for your article in EditFlow (if applicable).
% In the example shown below, the first author is the corresponding author. Please move or remove $^*$ as needed for your article.
\thanks{$^*$Corresponding author: Megha Pandey}
\date{}
\maketitle

\pagestyle{myheadings}\markboth{Megha Pandey and Mrinal Kanti Roychowdhury}{Optimal quantization for a probability measure on a nonuniform stretched Sierpi\'{n}ski triangle}

\begin{abstract}
Quantization for a Borel probability measure refers to the idea of estimating a given probability by a discrete probability with support containing a finite number of elements. In this paper, we have considered a Borel probability measure $P$ on $\mathbb R^2$, which has support a nonuniform stretched Sierpi\'{n}ski triangle generated by a set of three contractive similarity mappings on $\mathbb R^2$. For this probability measure, we investigate the optimal sets of $n$-means and the $n$th quantization errors for all positive integers $n$.
\end{abstract}

\section{Introduction}  
Optimal quantization is a fundamental problem in signal processing, data compression, and information theory. We refer to \cite{GG, GN, Z2} for surveys on the subject and comprehensive lists of references to the literature; see also \cite{AW, GKL, GL1, Z1}.
Recently, Pandey and Roychowdhury have introduced the concepts of constrained quantization and conditional quantization (see \cite{PR4, PR2, PR1}). A quantization without a constraint is known as an unconstrained quantization, i.e., unconstrained quantization, which traditionally in the literature is known as quantization, plays as a special case of constrained quantization. To know the details of constrained quantization on can see \cite{PR1}, and for unconstrained quantization one can see \cite{GL}.
After the introduction of constrained quantization and then conditional quantization, the quantization theory is now much more enriched with huge applications in our real world. For some follow up papers in the direction of constrained quantization and conditional quantization, one can see \cite{BCDRV, BCDR, HNPR, PR3, PR5}.  
On unconstrained quantization, there is a number of papers written by many authors; for example, one can see \cite{DR, DFG, GG, GL2, GL3, GL, GN, GL1, KNZ, P, P1, R1, R2, R3, RS, Z1, Z2}.
 \begin{defi}  \label{defi0} 
Let $P$ be a Borel probability measure on a $k$-dimensional Euclidean space $\D R^k$, where $k\in \D N$, equipped with a Euclidean metric $d$ induced by the Euclidean norm $\|\cdot\|$.  
Then, for $n\in \D N$, the \tit {$n$th quantization
	error} for $P$ is defined by
\begin{equation}  \label{eq0} 
V_{n}:=V_{n}(P)=\inf\Big\{\int \mathop{\min}\limits_{a\in\ga} \|x-a\|^2 dP(x) : \ga\sci \D R^k \te{ and } 1\leq \te{card}(\ga) \leq n \Big\},
\end{equation} 
where $\te{card}(A)$ represents the cardinality of the set $A$. 
\end{defi} 
We assume that $\int d(x, 0)^2 dP(x)<\infty$ to make sure that the infimum in \eqref{eq0} exists (see \cite{GL, PR1}). Such a set $\ga$ for which the infimum occurs and contains no more than $n$ elements is called an \tit{optimal set of $n$-means}.  The collection of all optimal sets of $n$-means for a Borel probability measure $P$ is denoted by $\C C_n:=\C C_n(P)$. The elements of an optimal set are called \tit{optimal elements}.  If $\ga$ is a finite set, in general, the error $\int \min_{a \in \ga} \|x-a\|^2 dP(x)$ is often referred to as the \tit{cost} or \tit{distortion error} for $\ga$, and is denoted by $V(P; \ga)$. Thus, $V_n:=V_n(P)=\inf\set{V(P; \ga) : \ga\sci \D  R^k, 1\leq \te{ card}(\alpha) \leq n}$. It is known that for a Borel probability measure with support containing at least n elements, an optimal set of n-means always has exactly $n$ elements (see \cite{GL, PR1}). The number
\[\lim_{n\to \infty} \frac{2\log n}{-\log V_n(P)}, \]
if it exists, is called the \tit{quantization dimension} of the probability measure $P$. The quantization dimension measures the speed how fast the specified measure of the error tends to zero as $n$ approaches infinity.  Given a finite subset $\ga\sci \D R^k$, the \tit{Voronoi region} generated by $a\in \ga$ is defined by
\[M(a|\ga)=\set{x \in \D R^k : \|x-a\|=\min_{b \in \ga}\|x-b\|}\]
i.e., the Voronoi region generated by $a\in \ga$ is the set of all elements $x$ in $\D R^k$ such that $a$ is a nearest element to $x$ in $\ga$, and the set $\set{M(a|\ga) : a \in \ga}$ is called the \tit{Voronoi diagram} or \tit{Voronoi tessellation} of $\D R^k$ with respect to $\ga$. A Voronoi tessellation is called a \tit{centroidal Voronoi tessellation} (CVT) if the generators of the tessellation are also the centroids of their own Voronoi regions with respect to the probability measure $P$. A Borel measurable partition $\set{A_a : a\in \ga}$, where $\ga$ is an index set, of $\D R^k$ is called a Voronoi partition of $\D R^k$ if $A_a\sci M(a|\ga)$ for every $a\in \ga$.
Let us now state the following proposition (see \cite{GG, GL}):
\begin{prop} \label{prop10}
Let $\alpha$ be an optimal set of $n$-means and $a\in \ga$. Then,

$(i)$ $P(M(a|\ga))>0$, $(ii)$ $ P(\partial M(a|\ga))=0$, $(iii)$ $a=E(X : X \in M(a|\ga))$, and $(iv)$ $P$-almost surely the set $\set{M(a|\ga) : a \in \ga}$ forms a Voronoi partition of $\D R^k$.
\end{prop}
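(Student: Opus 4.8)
\medskip
\noindent\textbf{Proof proposal.} The plan is to deduce all four assertions from two classical ingredients. The first is the strict monotonicity $V_{n-1}(P)>V_n(P)$ --- equivalently, that every optimal set of $n$-means has exactly $n$ elements, which holds because $P$ is continuous and hence $\te{card}(\te{supp}\,P)$ is infinite. The second is the parallel-axis identity: for any Borel set $A$ with $P(A)>0$, writing $m_A:=\frac1{P(A)}\int_A x\,dP(x)=E(X:X\in A)$, and for any $c\in\D R^d$,
\[\int_A\|x-c\|^2\,dP(x)=\int_A\|x-m_A\|^2\,dP(x)+P(A)\,\|c-m_A\|^2 ,\]
so that $c\mapsto\int_A\|x-c\|^2\,dP(x)$ attains its minimum \emph{only} at $c=m_A$.

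First I would establish the following ``partition-centroid'' claim: for \emph{every} Voronoi partition $\set{A_a:a\in\ga}$ of $\D R^d$ associated with the optimal set $\ga$, one has $P(A_a)>0$ and $a=m_{A_a}$ for all $a\in\ga$. For the first part, if $P(A_a)=0$ then, since for every $x$ outside $A_a$ a nearest point of $\ga$ already lies in $\ga\setminus\set a$, one gets $V(P;\ga\setminus\set a)=V(P;\ga)=V_n(P)$, hence $V_{n-1}(P)\le V_n(P)$, contradicting strict monotonicity. For the second part, using $\min_{b\in\ga}\|x-b\|^2=\|x-a\|^2$ on $A_a$,
\[V_n(P)=\sum_{a\in\ga}\int_{A_a}\|x-a\|^2\,dP(x)\ \ge\ \sum_{a\in\ga}\int_{A_a}\|x-m_{A_a}\|^2\,dP(x)\ \ge\ V\big(P;\setm{m_{A_a}}{a\in\ga}\big)\ \ge\ V_n(P),\]
so every inequality is an equality and the uniqueness in the parallel-axis identity forces $a=m_{A_a}$. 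Statement $(i)$ is then immediate, since $P(M(a|\ga))\ge P(A_a)>0$ for any Voronoi partition.

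For $(ii)$ I would argue by contradiction: suppose $P(\pa M(a|\ga))>0$ for some $a$. Since $M(a|\ga)=\bigcap_{b\ne a}\set{x:\|x-a\|\le\|x-b\|}$ is a finite intersection of closed half-spaces, $\pa M(a|\ga)$ lies in the finite union of the bisecting hyperplanes $H_{ab}=\set{x:\|x-a\|=\|x-b\|}$, so $P(F)>0$ for $F:=\pa M(a|\ga)\cap H_{ab}$ and some $b\ne a$; moreover every $x\in F$ satisfies $\|x-b\|=\|x-a\|=\min_{c\in\ga}\|x-c\|$, so both $a$ and $b$ are nearest generators of such $x$. I would then pick two Voronoi partitions of $\ga$ that agree outside $F$ but assign $F$ to the cell of $a$ in the first and to the cell of $b$ in the second; if $A_a$ is the cell of $a$ in the first, then $A_a\setminus F$ is the cell of $a$ in the second, and the partition-centroid claim gives $P(A_a\setminus F)>0$ and $a=m_{A_a}=m_{A_a\setminus F}$. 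Hence $m_{A_a}$ is a convex combination of $m_{A_a\setminus F}=a$ and $m_F$ with positive weight on $m_F$, which forces $a=m_F$. But $F$ lies in the affine hyperplane $H_{ab}$, so its conditional mean $m_F$ lies in $H_{ab}$ too, whence $\|a-a\|=\|a-b\|$, i.e.\ $a=b$ --- a contradiction. Therefore $P(\pa M(a|\ga))=0$.

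With $(ii)$ available, for the Voronoi partition above $P(M(a|\ga)\setminus A_a)=0$, hence $m_{M(a|\ga)}=m_{A_a}=a$; that is, $a=E(X:X\in M(a|\ga))$, which is $(iii)$. Finally, summing $(ii)$ over the finitely many $a\in\ga$ gives $P\big(\UUm{\pa M(a|\ga)}{a\in\ga}\big)=0$, and away from this $P$-null set the closed Voronoi regions $M(a|\ga)$ are pairwise disjoint (their interiors being disjoint) while still covering $\D R^d$; this is $(iv)$. I expect the main obstacle to be $(ii)$: it requires the centroid characterization to hold for \emph{all} Voronoi partitions of $\ga$ simultaneously --- which is exactly why the null cells must be ruled out in the preliminary claim --- after which the elementary observation that the mean of a distribution supported in a hyperplane stays in that hyperplane closes the argument.
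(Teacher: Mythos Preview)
Your proof is correct, but the paper itself does not prove this proposition at all: it merely states it and refers the reader to \cite{GG, GL2} (Graf--Luschgy's monograph). So there is no in-paper argument to compare to; what you have written is a self-contained proof that supplies what the paper outsources.

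A brief comment on your route versus the standard one in \cite{GL2}: the usual treatment derives $(i)$ and $(iii)$ from the first-order optimality conditions (the so-called ``nearest-neighbour'' and ``centroid'' conditions), and handles $(ii)$ by a direct perturbation of the quantizer. Your argument is organized differently and somewhat more elegantly: you first establish the partition-centroid claim uniformly over \emph{all} Voronoi partitions of the same optimal set, and then leverage that uniformity to force the conditional mean of any positive-mass boundary piece to coincide with two distinct generators simultaneously. The hyperplane observation that closes $(ii)$---that the barycenter of a set contained in the bisector $H_{ab}$ must itself lie in $H_{ab}$---is a clean replacement for the perturbation argument one typically sees. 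One small caveat worth making explicit: your appeal to strict monotonicity $V_{n-1}(P)>V_n(P)$ presupposes that $P$ is non-atomic (or at least that $\te{card}(\te{supp}\,P)\ge n$); the paper's Proposition is stated before any hypotheses on $P$ are fixed, but the intended application is to the R-measure, which is continuous, so this is harmless in context.
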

Let $\alpha$ be an optimal set of $n$-means and  $a \in \alpha$, then by Proposition~\ref{prop10}, we see that $a$ is the conditional expectation in its own Voronoi region, i.e., 
\begin{align*}
a=\frac{1}{P(M(a|\ga))}\int_{M(a|\ga)} x dP=\frac{\int_{M(a|\ga)} x dP}{\int_{M(a|\ga)} dP},
\end{align*} 
which also implies that $a$ is the centroid of the Voronoi region $M(a|\ga)$ associated with the probability measure $P$ (see  \cite{DFG, R1}).

Let $P$ be a Borel probability measure on $\D R$ given by $P=\frac 12 P\circ S_1^{-1}+\frac 12 P\circ S_2^{-1}$, where $S_1(x)=\frac 13x$ and $S_2(x)=\frac 13 x +\frac 23$ for all $x\in \D R$. Then, $P$ has support the classical Cantor set $C$.  For this probability measure Graf and Luschgy gave an exact formula to determine the optimal sets of $n$-means and the $n$th quantization errors for all $n\geq 2$; they also proved that the quantization dimension of this distribution exists and is equal to the Hausdorff dimension $\gb:=\log 2/(\log 3)$ of the Cantor set, but the $\gb$-dimensional quantization coefficient does not exist (see \cite{GL2}). The bounds of the above exact formula are given in \cite{R2}. In \cite{LR} for $n\geq 2$, L. Roychowdhury gave an induction formula to determine the optimal sets of $n$-means and the $n$th quantization errors for a Borel probability measure $P$ on $\D R$, given by $P=\frac 14 P\circ S_1^{-1}+\frac 34 P\circ S_2^{-1}$ which has support the Cantor set generated by $S_1$ and $S_2$, where $S_1(x)=\frac 14 x$ and $S_2(x)=\frac 12 x+\frac 12$ for all $x \in \D R$. In \cite{R3}, M. Roychowdhury gave an infinite extension of the result of Graf-Luschgy in \cite{GL2}. In \cite{CR1}, \c C\"omez and Roychowdhury gave an exact formula to determine the optimal sets of $n$-means and the $n$th quantization error for a Borel probability measure supported by a Cantor dust.  In \cite{R4}, for a nonuniform  probability measure $P$ on $\mathbb R^2$ which has support a Sierpi\'nski carpet generated by a set of four contractive similarity mappings with equal similarity ratios, Roychowdhury investigated the optimal sets of $n$-means and the $n$th quantization errors for all $n\geq 2$.
\par
Let us now consider a set of three contractive similarity mappings $S_1, S_2, S_3$ on $\D R^2$, such that
 $S_1(x_1, x_2)=\frac 13 (x_1, x_2)$, $S_2(x_1, x_2)=\frac 13 (x_1, x_2)+\frac 23(1, 0)$, and  $S_3(x_1, x_2)=\frac 13 (x_1, x_2)+ \frac 23(\frac 12, \frac {\sqrt{3}}{2})$ for all $(x_1, x_2) \in \D R^2$.  The limit set of the iterated function system $\{S_i \}_{i=1}^3 $ is called a stretched Sierpi\'{n}ski triangle. Let $P=\frac 13\mathop{\sum}_{j=1}^3 P\circ S_j^{-1}$. Then, $P$ is a unique Borel probability measure on $\D R^2$ with support the stretched Sierpi\'{n}ski triangle generated by $S_1, S_2, S_3$.  For this probability measure $P$, \c C\"omez and Roychowdhury determined the optimal sets of $n$-means and the $n$th quantization errors for all $n\geq 2$. Further, they showed that although the quantization dimension exists, the quantization coefficient for the probability measure $P$ does not exist (see \cite{CR2}).
\par

%In this paper, we have considered a set of three contractive similarity mappings $S_1, S_2, S_3$ on $\D R^2$, such that
% $S_1(x_1, x_2)=\frac 14 (x_1, x_2)$, $S_2(x_1, x_2)=\frac 14 (x_1, x_2)+\frac 34(1, 0)$, and  $S_3(x_1, x_2)=\frac 12 (x_1, x_2)+ \frac 12(\frac 12, \frac {\sqrt{3}}{2})$ for all $(x_1, x_2) \in \D R^2$.  In this case, we call the limit set, denoted by $S$,  as a \tit{nonuniform stretched Sierpi\'{n}ski triangle} generated by the contractive mappings $S_1, S_2, S_3$. The term `nonuniform' is used to mean that the basic triangles at each level in the construction of the stretched Sierpi\'{n}ski triangle are not of equal shape. Let $P=\frac 15P\circ S_1^{-1}+\frac 15 P\circ S_2^{-1}+\frac 3 5 P\circ S_3^{-1}$. Then, $P$ is a unique Borel probability measure on $\D R^2$ with support the nonuniform stretched Sierpi\'{n}ski triangle generated by $S_1, S_2, S_3$. For this probability measure $P$, in Theorem~\ref{Th1}, we state and prove an induction formula to determine the optimal sets of $n$-means for all $n\geq 2$. Once the optimal sets are known, the corresponding quantization errors can easily be obtained. We also give some figures to illustrate the locations of the elements in the optimal sets (see Figure~\ref{Fig1}, Figure~\ref{Fig2} and Figure~\ref{Fig3}). In addition, using the induction formula, we obtain some results and observations about the optimal sets of $n$-means which are given in Section~4; a tree diagram of the optimal sets of $n$-means for a certain range of $n$ is also given (see Figure~\ref{Fig4}).

In this paper, we have considered a set of three contractive similarity mappings $S_1, S_2, S_3$ on $\D R^2$, such that
$S_1(x_1, x_2)=\frac 14 (x_1, x_2)$, $S_2(x_1, x_2)=\frac 14 (x_1, x_2)+\frac 34(1, 0)$, and  $S_3(x_1, x_2)=\frac 12 (x_1, x_2)+ \frac 12(\frac 12, \frac {\sqrt{3}}{2})$ for all $(x_1, x_2) \in \D R^2$.  In this case, we call the limit set, denoted by $S$,  as a \tit{nonuniform stretched Sierpi\'{n}ski triangle} generated by the contractive mappings $S_1, S_2, S_3$. The term `nonuniform' is used to mean that the basic triangles at each level in the construction of the stretched Sierpi\'{n}ski triangle are not of equal shape. Let $P=\frac 15P\circ S_1^{-1}+\frac 15 P\circ S_2^{-1}+\frac 3 5 P\circ S_3^{-1}$. Then, $P$ is a unique Borel probability measure on $\D R^2$ with support the nonuniform stretched Sierpi\'{n}ski triangle generated by $S_1, S_2, S_3$. 

\subsection{Delineation} 
For the probability measure $P$  with support the nonuniform stretched Sierpi\'{n}ski triangle generated by $S_1, S_2, S_3$, in this paper, in Theorem~\ref{Th1}, we state and prove an induction formula to determine the optimal sets of $n$-means for all $n\geq 2$. Once the optimal sets are known, the corresponding quantization errors can easily be obtained. We also give some figures to illustrate the locations of the elements in the optimal sets (see Figure~\ref{Fig1}, Figure~\ref{Fig2} and Figure~\ref{Fig3}). In addition, using the induction formula, we obtain some results and observations about the optimal sets of $n$-means which are given in Section~4; a tree diagram of the optimal sets of $n$-means for a certain range of $n$ is also given (see Figure~\ref{Fig4}).

\subsection{Significance of the work} Over the time, quantization dimensions and quantization coefficients for different fractal probability measures were investigated by many researchers. On the other hand, finding the optimal sets of $n$-means and the $n$th quantization errors are much more difficult. The work in this paper is an endeavor in this direction. By using the methodology given in this paper, or with an overhaul, one can calculate the optimal sets of $n$-means and the $n$th quantization errors for more general fractal probability measures.

\begin{figure} 
\begin{tikzpicture}[line cap=round,line join=round,>=triangle 45,x=1.0cm,y=1.0cm]
\clip(-0.787435429074699,-0.985782091795184) rectangle (8.972433259875483,8.36691883633011);
\draw [line width=0.5 pt] (0.,0.)-- (8.,0.);
\draw [line width=0.5 pt] (0.,0.)-- (4.,6.928203230275509);
\draw [line width=0.5 pt] (8.,0.)-- (4.,6.928203230275509);
\draw [line width=0.2 pt,dash pattern=on 2pt off 2pt] (4.,0.)-- (4.,6.928203230275509);
\draw [line width=0.5 pt] (2.,0.)-- (1.,1.7320508075688772);
\draw [line width=0.5 pt] (6.,0.)-- (7.,1.7320508075688772);
\draw [line width=0.5 pt] (2.,3.4641016151377544)-- (6.,3.4641016151377544);
\draw [line width=0.5 pt] (0.5,0.866025)-- (1.5,0.866025);
\draw [line width=0.5 pt] (0.,0.)-- (0.5,0.);
\draw [line width=0.5 pt] (0.25,0.433013)-- (0.5,0.);
\draw [line width=0.5 pt] (1.5,0.)-- (1.75,0.433013);
\draw [line width=0.5 pt] (7.5,0.)-- (7.75,0.433013);
\draw [line width=0.5 pt] (6.5,0.)-- (6.25,0.433013);
\draw [line width=0.5 pt] (6.5,0.866025)-- (7.5,0.866025);
\draw [line width=0.5 pt] (3.,5.19615)-- (5.,5.19615);
\draw [line width=0.5 pt] (3.,3.4641)-- (2.5,4.33013);
\draw [line width=0.5 pt] (5.,3.4641)-- (5.5,4.33013);
\draw [line width=0.5 pt] (3.5,5.19615)-- (3.25,5.62917);
\draw [line width=0.5 pt] (4.5,5.19615)-- (4.75,5.62917);
\draw [line width=0.5 pt] (3.5,6.06218)-- (4.5,6.06218);
\draw (3.7193157594453146,6.004417138366035) node[anchor=north west] {$\triangle_{33} $};
\draw (2.157303081167313,4.0020195834781775) node[anchor=north west] {$\triangle_{31}$};
\draw (5.205899413499102,3.9220195834781775) node[anchor=north west] {$\triangle_{32}$};
\draw (3.7362684422614577,4.448120556984892) node[anchor=north west] {$\triangle_3$};
\draw (0.7291957685215965,0.6417071542495328) node[anchor=north west] {$\triangle_1$};
\draw (6.7709594089609615,0.6417071542495328) node[anchor=north west] {$\triangle_2$};
\draw (0.6137667442973821,1.3883791035320332) node[anchor=north west] {$\triangle_{13}$};
\draw (1.7662525944139552,0.6417071542495328) node[anchor=north west] {$\triangle_{12}$};
\draw (-0.5641481300434051,0.6417071542495328) node[anchor=north west] {$\triangle_{11}$};
\draw (6.660101360512533,1.3883791035320332) node[anchor=north west] {$\triangle_{23}$};
\draw (5.4221864861717455,0.6417071542495328) node[anchor=north west] {$\triangle_{21}$};
\draw (7.756492576261391,0.6417071542495328) node[anchor=north west] {$\triangle_{22}$};
\draw (3.7193157594453146,2.3643854426710336) node[anchor=north west] {$\triangle$};
\draw (2.9235018586112405,5.208795675836889) node[anchor=north west] {$\triangle_{331}$};
\draw (4.388561854073099, 5.208795675836889) node[anchor=north west] {$\triangle_{332}$};
\draw (3.7193157594453146,6.59417138366035) node[anchor=north west] {$\triangle_{333}$};
\draw (-0.7858642269402625,0.07331910807017522) node[anchor=north west] {$(0, 0)$};
\draw (7.955826966117891,0.14722447370246097) node[anchor=north west] {$(1,0)$};
\draw (3.956460880566387,7.570335182276322) node[anchor=north west] {$(\frac 12, \frac{\sqrt 3} {2})$};
\draw (0.744432354440882,3.887448607702391) node[anchor=north west] {$(\frac{1}{4},\frac{\sqrt{3}}{4})$};
\draw (5.888000387005818,3.887448607702391) node[anchor=north west] {$(\frac{3}{4},\frac{\sqrt{3}}{4})$};
\draw (1.7247289358220265,0.05874813229438953) node[anchor=north west] {$(\frac{1}{4},0)$};
\draw (5.602378924476674,0.05874813229438953) node[anchor=north west] {$(\frac{3}{4},0)$};
\draw (-0.2271954472272622,2.157198325275336) node[anchor=north west] {$(\frac{1}{8},\frac{\sqrt{3}}{8})$};
\draw (6.926580871490104,2.157198325275336) node[anchor=north west] {$(\frac{7}{8},\frac{\sqrt{3}}{8})$};
\draw (2.439211616369099,3.594682024052177) node[anchor=north west] {$(\frac{3}{8},\frac{\sqrt{3}}{4})$};
\draw (4.3020823430955295,3.594682024052177) node[anchor=north west] {$(\frac{5}{8},\frac{\sqrt{3}}{4})$};
\draw (0.9368140614812393,4.741265678105964) node[anchor=north west] {$(\frac{5}{16},\frac{5 \sqrt{3}}{16})$};
\draw (5.4333044534620174,4.741265678105964) node[anchor=north west] {$(\frac{11}{16},\frac{5 \sqrt{3}}{16})$};
\draw (1.5728204006202403,5.585748358653035) node[anchor=north west] {$(\frac{3}{8},\frac{3 \sqrt{3}}{8})$};
\draw (4.941801609561888,5.585748358653035) node[anchor=north west] {$(\frac{5}{8},\frac{3 \sqrt{3}}{8})$};
\end{tikzpicture}
\vspace{-0.2 in} 
\caption{Some basic triangles with their vertices that construct the stretched Sierpi\'{n}ski triangle.}\label{Fig0}
\end{figure}

\section{Basic definitions and lemmas}

In this section, we give the basic definitions and lemmas that will be instrumental in our analysis. By a \textit{string} or a \textit{word} $\go$ over an alphabet $I:=\{1, 2, 3\}$, we mean a finite sequence $\go:=\go_1\go_2\cdots \go_k$
of symbols from the alphabet, where $k\geq 1$, and $k$ is called the length of the word $\go$.  A word of length zero is called the \textit{empty word} and is denoted by $\emptyset$.  By $I^*$, we denote the set of all words
over the alphabet $I$ of some finite length $k$, including the empty word $\emptyset$. By $|\go|$, we denote the length of a word $\go \in I^*$. For any two words $\go:=\go_1\go_2\cdots \go_k$ and
$\tau:=\tau_1\tau_2\cdots \tau_\ell$ in $I^*$, by
$\go\tau:=\go_1\cdots \go_k\tau_1\cdots \tau_\ell$ we mean the word obtained from the concatenation of $\go$ and $\tau$.  As defined in the previous section, the mappings $S_i :\D R^2 \to \D R^2$ are the generating mappings of the nonuniform stretched Sierpi\'{n}ski triangle with similarity ratios $s_i$ for $1\leq i\leq 3$, respectively, and $P=\mathop{\sum}\limits_{i=1}^3 p_i P\circ S_i^{-1}$ is the probability distribution, where $s_1=s_2=\frac 14$, $s_3=\frac 12$, $p_1=p_2=\frac 15$ and $p_3=\frac 35$. In short, the `nonuniform stretched Sierpi\'{n}ski triangle'  in the sequel will be referred to as `stretched Sierpi\'{n}ski triangle'. 
For $\go=\go_1\go_2 \cdots\go_k \in I^k$, set
$S_\go:=S_{\go_1} \circ S_{\go_2}\circ  \cdots\circ S_{\go_k},  \ s_\go:=s_{\go_1}s_{\go_2}\cdots s_{\go_k} \te{ and } p_\go:=p_{\go_1}p_{\go_2}\cdots p_{\go_k}$.
Let $\tri$ be the equilateral triangle with vertices $(0, 0)$, $(1, 0)$ and $(\frac 12, \frac {\sqrt{3}} {2})$, and $\tri_\go=S_\go(\tri)$ for $\go=\go_1\go_2 \cdots\go_k \in I^k$. The sets $\set{\tri_\go : \go \in I^k}$ are just the $3^k$ triangles in the $k$th level in the construction of the stretched Sierpi\'{n}ski triangle. The triangles $\tri_{\go1}$, $\tri_{\go2}$ and $\tri_{\go3}$ into which $\tri_\go$ is split up at the $(k+1)$th level are called the \tit{basic triangles} of $\tri_\go$ (see Figure~\ref{Fig0}). The set $S:=\II_{k \in \D N} \UU_{\go \in I^k} \tri_\go$ is the stretched Sierpi\'{n}ski triangle and equals the support of the probability measure $P$. For $\go=\go_1 \go_2 \cdots \go_k \in I^k$, let us write
$c(\go):=\#\{i : \go_i=3, \, 1\leq i\leq k\}$. Then, we have
\[P(\tri_\go)=p_\go=\frac{3^{c(\go)}}{5^{|\go|}} \te{ and } s_\go=\frac{2^{c(\go)}}{4^{|\go|}}.\]

Let us now give the following lemma.
\begin{lemma} \label{lemma1} Let $f: \D R \to \D R^+$ be Borel measurable and $k\in \D N$. Then,
\[\int f \,dP=\sum_{\go \in I^k} p_\go\int f\circ S_\go \,dP.\]
\end{lemma}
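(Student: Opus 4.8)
The plan is to prove the identity by induction on $k$, with the self-similarity relation $P=\sum_{i=1}^{3}p_i\,P\circ S_i^{-1}$ that \emph{defines} $P$ providing both the base case and the engine of the inductive step. The one technical ingredient I will invoke is the standard change-of-variables formula for image (push-forward) measures: if $T:\D R^2\to\D R^2$ is Borel measurable and $g:\D R^2\to\D R^+$ is Borel measurable, then $\int g\,d(P\circ T^{-1})=\int g\circ T\,dP$. This holds for $g$ an indicator by the definition of the image measure, extends to nonnegative simple functions by linearity, and to arbitrary nonnegative Borel $g$ by the monotone convergence theorem. Since $f\geq 0$, all sums appearing are finite sums of nonnegative terms, so interchanging them with integrals needs no integrability assumption.

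For the base case $k=1$, I would apply the image-measure formula with $T=S_i$ and $g=f$ to each term of the defining identity for $P$, obtaining
\[\int f\,dP=\sum_{i=1}^{3}p_i\int f\,d(P\circ S_i^{-1})=\sum_{i=1}^{3}p_i\int f\circ S_i\,dP,\]
which is exactly the asserted formula for words of length one (here $p_\go=p_{\go_1}$ and $S_\go=S_{\go_1}$). For the inductive step, assume the formula holds for some $k\geq 1$. Fixing $\go\in I^k$, the function $f\circ S_\go$ is again nonnegative and Borel measurable, so the $k=1$ case applies to it and yields $\int f\circ S_\go\,dP=\sum_{i=1}^{3}p_i\int (f\circ S_\go)\circ S_i\,dP=\sum_{i=1}^{3}p_i\int f\circ S_{\go i}\,dP$, where I used $S_{\go i}=S_\go\circ S_i$. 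Substituting this into the induction hypothesis and using $p_{\go i}=p_\go p_i$ gives
\[\int f\,dP=\sum_{\go\in I^k}p_\go\sum_{i=1}^{3}p_i\int f\circ S_{\go i}\,dP=\sum_{\go\in I^k}\sum_{i=1}^{3}p_{\go i}\int f\circ S_{\go i}\,dP.\]
Since every $\gt\in I^{k+1}$ is uniquely of the form $\gt=\go i$ with $\go\in I^k$ and $i\in I$, the right-hand side equals $\sum_{\gt\in I^{k+1}}p_\gt\int f\circ S_\gt\,dP$, which closes the induction.

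I do not expect a genuine obstacle here: the statement is a routine consequence of the push-forward structure of $P$, and the only thing requiring a word of care is the reindexing bijection $I^k\times I\cong I^{k+1}$ together with the multiplicativity $p_{\go i}=p_\go p_i$ and functoriality $S_{\go i}=S_\go\circ S_i$ of the defining data. If one wanted to be completely explicit, the image-measure formula itself could be spelled out via the monotone-class/monotone-convergence argument sketched above, but for the purposes of the paper it is safe to quote it. This lemma will then be used repeatedly to decompose distortion integrals $\int\min_{a\in\ga}\|x-a\|^2\,dP$ over the R-triangle into weighted contributions from the sub-triangles $\tri_\go$.
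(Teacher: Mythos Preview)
Your proof is correct and follows essentially the same approach as the paper: both argue by induction on $k$ from the defining self-similarity relation $P=\sum_{i=1}^3 p_i\,P\circ S_i^{-1}$, with the paper inducting at the level of measures to get $P=\sum_{\go\in I^k}p_\go\,P\circ S_\go^{-1}$ and then integrating, while you induct directly on the integral identity. The only difference is that you spell out the push-forward change-of-variables and the reindexing $I^k\times I\cong I^{k+1}$ explicitly, whereas the paper leaves these implicit.
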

\begin{proof}
We know $P=\mathop{\sum}\limits_{i=1}^3 p_i P\circ S_i^{-1}$, and so by induction $P=\mathop{\sum}\limits_{\go \in I^k}  p_\go P\circ S_\go^{-1}$, and thus the lemma is yielded.
\end{proof}

Let us now state the following lemma. The proof is routine (see \cite{CR2, R4}).
\begin{lemma} \label{lemma333} Let $E(X)$ and $V(X)$ denote the expected vector and the variance of the random variable $X$. Then, \[E(X)=(E(X_1), \, E(X_2))=(\frac{1}{2},\frac{\sqrt{3}}{4}) \te{ and } V:=V(X)=E\|X-(\frac 1 2, \frac {\sqrt {3}}{4})\|^2=\frac{27}{176}\]
with $V(X_1)=\frac 3 {44}$ and $V(X_2)=\frac{15}{176}$.
\end{lemma}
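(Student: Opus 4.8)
The plan is to exploit the self-similarity identities for the marginal distributions $P_1$ and $P_2$ recorded in the preceding lemma, which reduce the computation of moments to solving scalar linear equations. Note first that $P$ (hence $P_1$ and $P_2$) has compact support, so all the moments below are finite and the self-similar identity for the measure may be integrated termwise against the monomials $x$ and $x^2$ (equivalently, one applies Lemma~\ref{lemma1} after splitting into positive and negative parts).

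First I would compute $E(X_1)$. Writing $a:=E(X_1)$ and integrating $P_1=\tfrac15 P_1\circ S_{(11)}^{-1}+\tfrac15 P_1\circ S_{(21)}^{-1}+\tfrac35 P_1\circ S_{(31)}^{-1}$ against $x\mapsto x$ gives
\[ a=\tfrac15 E(S_{(11)}(X_1))+\tfrac15 E(S_{(21)}(X_1))+\tfrac35 E(S_{(31)}(X_1))=\tfrac15\cdot\tfrac a4+\tfrac15\Bigl(\tfrac a4+\tfrac34\Bigr)+\tfrac35\Bigl(\tfrac a2+\tfrac14\Bigr), \]
which simplifies to $a=\tfrac25 a+\tfrac{3}{10}$, so $a=\tfrac12$. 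The same argument applied to $P_2$ with the components $S_{(12)}, S_{(22)}, S_{(32)}$ yields, for $b:=E(X_2)$, the equation $b=\tfrac25 b+\tfrac{3\sqrt3}{20}$, hence $b=\tfrac{\sqrt3}{4}$. This establishes $E(X)=(\tfrac12,\tfrac{\sqrt3}{4})$.

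Next I would compute the second moments by the same device, now integrating against $x\mapsto x^2$ and using the values of $E(X_1),E(X_2)$ just found. Expanding $(\tfrac{x}{4})^2$, $(\tfrac x4+\tfrac34)^2$ and $(\tfrac x2+\tfrac14)^2$ and substituting $E(X_1)=\tfrac12$, the identity for $m:=E(X_1^2)$ becomes $m=\tfrac{7}{40}m+\tfrac{21}{80}$, so $m=\tfrac{7}{22}$ and therefore $V(X_1)=E(X_1^2)-E(X_1)^2=\tfrac{7}{22}-\tfrac14=\tfrac{3}{44}$. Likewise, using $E(X_2)=\tfrac{\sqrt3}{4}$, the identity for $m_2:=E(X_2^2)$ becomes $m_2=\tfrac{7}{40}m_2+\tfrac{9}{40}$, so $m_2=\tfrac{3}{11}$ and $V(X_2)=\tfrac{3}{11}-\tfrac{3}{16}=\tfrac{15}{176}$.

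Finally, since $\bigl(\tfrac12,\tfrac{\sqrt3}{4}\bigr)=E(X)$, we have $E\|X-(\tfrac12,\tfrac{\sqrt3}{4})\|^2=E[(X_1-\tfrac12)^2]+E[(X_2-\tfrac{\sqrt3}{4})^2]=V(X_1)+V(X_2)=\tfrac{3}{44}+\tfrac{15}{176}=\tfrac{27}{176}$, which gives $V=\tfrac{27}{176}$ and completes the proof. There is no substantive obstacle in this argument; the only thing requiring a word of care is the termwise integration of the self-similar identity against $x$ and $x^2$, which is immediate from compact support, and everything else is elementary algebra.
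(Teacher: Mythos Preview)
Your proof is correct and follows essentially the same approach as the paper: both use the self-similarity identities for the marginal distributions $P_1,P_2$ to set up and solve linear equations for $E(X_i)$ and $E(X_i^2)$, and then obtain $V=V(X_1)+V(X_2)$. Your added remark on compact support justifying termwise integration is a nice touch that the paper leaves implicit.
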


Let us now give the following note.
\begin{note} \label{note1} From Lemma~\ref{lemma333}, it follows that the optimal set of one-mean is the expected vector, and the corresponding quantization error is the variance $V$ of the random variable $X$. For words $\gb, \gg, \cdots, \gd$ in $I^\ast$, by $a(\gb, \gg, \cdots, \gd)$ we mean the conditional expected vector of the random variable $X$ given $\tri_\gb\uu \tri_\gg \uu\cdots \uu \tri_\gd,$ i.e.,
\begin{equation} \label{eq010}
a(\gb, \gg, \cdots, \gd)=E(X|X\in \tri_\gb \uu \tri_\gg \uu \cdots \uu \tri_\gd)=\frac{1}{P(\tri_\gb\uu \cdots \uu \tri_\gd)}\int_{\tri_\gb\uu \cdots \uu \tri_\gd} x dP.
\end{equation}
For $\go \in I^k$, $k\geq 1$, since $a(\go)=E(X : X \in J_\go)$, using Lemma~\ref{lemma1}, we have
\begin{align*}
&a(\go)=\frac{1}{P(\tri_\go)} \int_{\tri_\go} x \,dP(x)=\int_{\tri_\go} x\, dP\circ S_\go^{-1}(x)=\int S_\go(x)\, dP(x)=E(S_\go(X))=S_\go(\frac 12, \frac {\sqrt{3}}{4}).
\end{align*}  For any $(a, b) \in \D R^2$, $E\|X-(a, b)\|^2=V+\|(\frac 12, \frac {\sqrt{3}} 4)-(a, b)\|^2.$
In fact, for any $\go \in I^k$, $k\geq 1$, we have
$\int_{\tri_\go}\|x-(a, b)\|^2 dP= p_\go \int\|(x_1, x_2) -(a, b)\|^2 dP\circ S_\go^{-1},$
which implies
\begin{equation} \label{eq1}
\int_{\tri_\go}\|x-(a, b)\|^2 dP=p_\go \Big(s_\go^2V+\|a(\go)-(a, b)\|^2\Big).
\end{equation}
The expressions \eqref{eq010} and \eqref{eq1} are useful to obtain the optimal sets and the corresponding quantization errors with respect to the probability distribution $P$.  Notice that with respect to the median passing through the vertex $(\frac 12, \frac{\sqrt{3}}2)$, the stretched Sierpi\'{n}ski triangle has the maximum symmetry, i.e., with respect to the line $x_1=\frac 12$ the stretched Sierpi\'{n}ski triangle is geometrically symmetric. Also, observe that if the two basic rectangles of similar geometrical shape lie on opposite sides of the line $x_1=\frac 12$, and are equidistant from the line $x_1=\frac 12$, then they have the same probability (see Figure~\ref{Fig0}); hence, they are symmetric with respect to the probability distribution $P$ as well.
\end{note}
In the next section, we determine the optimal sets of $n$-means for all $n\geq 2$.

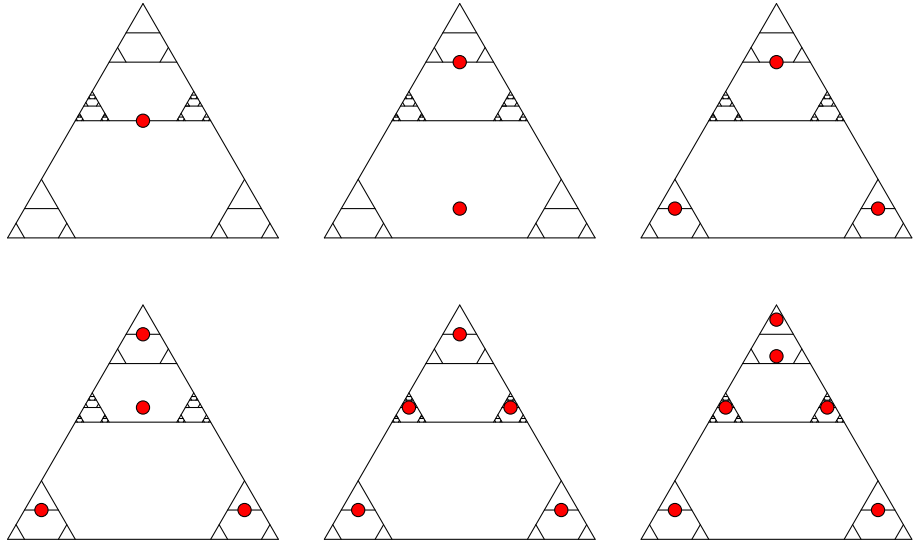
\begin{figure}
\begin{tikzpicture}[line cap=round,line join=round,>=triangle 45,x=0.3 cm,y=0.3 cm]
\clip(-0.717946985327632,-0.81755092664275) rectangle (12.8592934709513,11.586582306908248);
\draw (0.,0.)-- (6.,10.392304845413264);
\draw (6.,10.392304845413264)-- (12.,0.);
\draw (0.,0.)-- (12.,0.);
\draw (1.5,2.598076211353316)-- (3.,0.);
\draw (10.5,2.598076211353316)-- (9.,0.);
\draw (3.,5.196152422706632)-- (9.,5.196152422706632);
\draw (3.75,6.495190528383289)-- (4.5,5.196152422706632);
\draw (8.25,6.495190528383289)-- (7.5,5.196152422706632);
\draw (4.5,7.794228634059947)-- (7.5,7.794228634059947);
\draw (0.375,0.649519052838329)-- (0.75,0.);
\draw (2.625,0.649519052838329)-- (2.25,0.);
\draw (0.75,1.299038105676658)-- (2.25,1.299038105676658);
\draw (9.75,1.299038105676658)-- (11.25,1.299038105676658);
\draw (9.375,0.649519052838329)-- (9.75,0.);
\draw (11.625,0.649519052838329)-- (11.25,0.);
\draw (4.875,8.443747686898277)-- (5.25,7.794228634059947);
\draw (7.125,8.443747686898277)-- (6.75,7.794228634059947);
\draw (5.25,9.093266739736606)-- (6.75,9.093266739736606);
\draw (3.1875,5.520911949125796)-- (3.375,5.196152422706632);
\draw (4.3125,5.520911949125796)-- (4.125,5.196152422706632);
\draw (7.6875,5.520911949125796)-- (7.875,5.196152422706632);
\draw (8.8125,5.520911949125796)-- (8.625,5.196152422706632);
\draw (3.375,5.84567147554496)-- (4.125,5.84567147554496);
\draw (7.875,5.84567147554496)-- (8.625,5.84567147554496);
\draw (3.46875,6.008051238754542)-- (3.5625,5.84567147554496);
\draw (4.03125,6.008051238754542)-- (3.9375,5.84567147554496);
\draw (8.53125,6.008051238754542)-- (8.4375,5.84567147554496);
\draw (7.96875,6.008051238754542)-- (8.0625,5.84567147554496);
\draw (7.875,5.84567147554496)-- (8.0625,5.84567147554496);
\draw (3.5625,6.170431001964125)-- (3.9375,6.170431001964125);
\draw (8.0625,6.170431001964125)-- (8.4375,6.170431001964125);
\draw (3.046875,5.277342304311422)-- (3.09375,5.196152422706632);
\draw (3.328125,5.277342304311422)-- (3.28125,5.196152422706632);
\draw (3.09375,5.358532185916213)-- (3.28125,5.358532185916213);
\draw (3.65625,6.332810765173707)-- (3.84375,6.332810765173707);
\draw (3.609375,6.251620883568916)-- (3.65625,6.170431001964125);
\draw (3.890625,6.251620883568916)-- (3.84375,6.170431001964125);
\draw (3.1171875,5.399127126718609)-- (3.140625,5.358532185916213);
\draw (3.2578125,5.399127126718609)-- (3.234375,5.358532185916213);
\draw (3.140625,5.4397220675210045)-- (3.234375,5.4397220675210045);
\draw (4.171875,5.277342304311422)-- (4.21875,5.196152422706632);
\draw (4.453125,5.277342304311422)-- (4.40625,5.196152422706632);
\draw (4.21875,5.358532185916213)-- (4.40625,5.358532185916213);
\draw (4.2421875,5.399127126718609)-- (4.265625,5.358532185916213);
\draw (4.3828125,5.399127126718609)-- (4.359375,5.358532185916213);
\draw (4.265625,5.4397220675210045)-- (4.359375,5.4397220675210045);
\draw (7.546875,5.277342304311422)-- (7.59375,5.196152422706632);
\draw (7.828125,5.277342304311422)-- (7.78125,5.196152422706632);
\draw (8.953125,5.277342304311422)-- (8.90625,5.196152422706632);
\draw (8.671875,5.277342304311422)-- (8.71875,5.196152422706632);
\draw (8.71875,5.358532185916213)-- (8.90625,5.358532185916213);
\draw (7.59375,5.358532185916213)-- (7.78125,5.358532185916213);
\draw (8.15625,6.332810765173707)-- (8.34375,6.332810765173707);
\draw (8.109375,6.251620883568916)-- (8.15625,6.170431001964125);
\draw (8.390625,6.251620883568916)-- (8.34375,6.170431001964125);
\begin{scriptsize}
\draw [fill=ffqqqq] (6.,5.196152422706632) circle (2.5pt);
\end{scriptsize}
\end{tikzpicture}\
 \begin{tikzpicture}[line cap=round,line join=round,>=triangle 45,x=0.3 cm,y=0.3 cm]
\clip(-0.717946985327634,-0.8175509266427525) rectangle (12.859293470951293,11.58658230690825);
\draw (0.,0.)-- (6.,10.392304845413264);
\draw (6.,10.392304845413264)-- (12.,0.);
\draw (0.,0.)-- (12.,0.);
\draw (1.5,2.598076211353316)-- (3.,0.);
\draw (10.5,2.598076211353316)-- (9.,0.);
\draw (3.,5.196152422706632)-- (9.,5.196152422706632);
\draw (3.75,6.495190528383289)-- (4.5,5.196152422706632);
\draw (8.25,6.495190528383289)-- (7.5,5.196152422706632);
\draw (4.5,7.794228634059947)-- (7.5,7.794228634059947);
\draw (0.375,0.649519052838329)-- (0.75,0.);
\draw (2.625,0.649519052838329)-- (2.25,0.);
\draw (0.75,1.299038105676658)-- (2.25,1.299038105676658);
\draw (9.75,1.299038105676658)-- (11.25,1.299038105676658);
\draw (9.375,0.649519052838329)-- (9.75,0.);
\draw (11.625,0.649519052838329)-- (11.25,0.);
\draw (4.875,8.443747686898277)-- (5.25,7.794228634059947);
\draw (7.125,8.443747686898277)-- (6.75,7.794228634059947);
\draw (5.25,9.093266739736606)-- (6.75,9.093266739736606);
\draw (3.1875,5.520911949125796)-- (3.375,5.196152422706632);
\draw (4.3125,5.520911949125796)-- (4.125,5.196152422706632);
\draw (7.6875,5.520911949125796)-- (7.875,5.196152422706632);
\draw (8.8125,5.520911949125796)-- (8.625,5.196152422706632);
\draw (3.375,5.84567147554496)-- (4.125,5.84567147554496);
\draw (7.875,5.84567147554496)-- (8.625,5.84567147554496);
\draw (3.46875,6.008051238754542)-- (3.5625,5.84567147554496);
\draw (4.03125,6.008051238754542)-- (3.9375,5.84567147554496);
\draw (8.53125,6.008051238754542)-- (8.4375,5.84567147554496);
\draw (7.96875,6.008051238754542)-- (8.0625,5.84567147554496);
\draw (7.875,5.84567147554496)-- (8.0625,5.84567147554496);
\draw (3.5625,6.170431001964125)-- (3.9375,6.170431001964125);
\draw (8.0625,6.170431001964125)-- (8.4375,6.170431001964125);
\draw (3.046875,5.277342304311422)-- (3.09375,5.196152422706632);
\draw (3.328125,5.277342304311422)-- (3.28125,5.196152422706632);
\draw (3.09375,5.358532185916213)-- (3.28125,5.358532185916213);
\draw (3.65625,6.332810765173707)-- (3.84375,6.332810765173707);
\draw (3.609375,6.251620883568916)-- (3.65625,6.170431001964125);
\draw (3.890625,6.251620883568916)-- (3.84375,6.170431001964125);
\draw (3.1171875,5.399127126718609)-- (3.140625,5.358532185916213);
\draw (3.2578125,5.399127126718609)-- (3.234375,5.358532185916213);
\draw (3.140625,5.4397220675210045)-- (3.234375,5.4397220675210045);
\draw (4.171875,5.277342304311422)-- (4.21875,5.196152422706632);
\draw (4.453125,5.277342304311422)-- (4.40625,5.196152422706632);
\draw (4.21875,5.358532185916213)-- (4.40625,5.358532185916213);
\draw (4.2421875,5.399127126718609)-- (4.265625,5.358532185916213);
\draw (4.3828125,5.399127126718609)-- (4.359375,5.358532185916213);
\draw (4.265625,5.4397220675210045)-- (4.359375,5.4397220675210045);
\draw (7.546875,5.277342304311422)-- (7.59375,5.196152422706632);
\draw (7.828125,5.277342304311422)-- (7.78125,5.196152422706632);
\draw (8.953125,5.277342304311422)-- (8.90625,5.196152422706632);
\draw (8.671875,5.277342304311422)-- (8.71875,5.196152422706632);
\draw (8.71875,5.358532185916213)-- (8.90625,5.358532185916213);
\draw (7.59375,5.358532185916213)-- (7.78125,5.358532185916213);
\draw (8.15625,6.332810765173707)-- (8.34375,6.332810765173707);
\draw (8.109375,6.251620883568916)-- (8.15625,6.170431001964125);
\draw (8.390625,6.251620883568916)-- (8.34375,6.170431001964125);
\begin{scriptsize}
\draw [fill=ffqqqq] (6.,1.299038105676658) circle (2.5pt);
\draw [fill=ffqqqq] (6.,7.794228634059947) circle (2.5pt);
\end{scriptsize}
\end{tikzpicture}\
\begin{tikzpicture}[line cap=round,line join=round,>=triangle 45,x=0.3 cm,y=0.3 cm]
\clip(-0.717946985327634,-0.8175509266427525) rectangle (12.859293470951293,11.58658230690825);
\draw (0.,0.)-- (6.,10.392304845413264);
\draw (6.,10.392304845413264)-- (12.,0.);
\draw (0.,0.)-- (12.,0.);
\draw (1.5,2.598076211353316)-- (3.,0.);
\draw (10.5,2.598076211353316)-- (9.,0.);
\draw (3.,5.196152422706632)-- (9.,5.196152422706632);
\draw (3.75,6.495190528383289)-- (4.5,5.196152422706632);
\draw (8.25,6.495190528383289)-- (7.5,5.196152422706632);
\draw (4.5,7.794228634059947)-- (7.5,7.794228634059947);
\draw (0.375,0.649519052838329)-- (0.75,0.);
\draw (2.625,0.649519052838329)-- (2.25,0.);
\draw (0.75,1.299038105676658)-- (2.25,1.299038105676658);
\draw (9.75,1.299038105676658)-- (11.25,1.299038105676658);
\draw (9.375,0.649519052838329)-- (9.75,0.);
\draw (11.625,0.649519052838329)-- (11.25,0.);
\draw (4.875,8.443747686898277)-- (5.25,7.794228634059947);
\draw (7.125,8.443747686898277)-- (6.75,7.794228634059947);
\draw (5.25,9.093266739736606)-- (6.75,9.093266739736606);
\draw (3.1875,5.520911949125796)-- (3.375,5.196152422706632);
\draw (4.3125,5.520911949125796)-- (4.125,5.196152422706632);
\draw (7.6875,5.520911949125796)-- (7.875,5.196152422706632);
\draw (8.8125,5.520911949125796)-- (8.625,5.196152422706632);
\draw (3.375,5.84567147554496)-- (4.125,5.84567147554496);
\draw (7.875,5.84567147554496)-- (8.625,5.84567147554496);
\draw (3.46875,6.008051238754542)-- (3.5625,5.84567147554496);
\draw (4.03125,6.008051238754542)-- (3.9375,5.84567147554496);
\draw (8.53125,6.008051238754542)-- (8.4375,5.84567147554496);
\draw (7.96875,6.008051238754542)-- (8.0625,5.84567147554496);
\draw (7.875,5.84567147554496)-- (8.0625,5.84567147554496);
\draw (3.5625,6.170431001964125)-- (3.9375,6.170431001964125);
\draw (8.0625,6.170431001964125)-- (8.4375,6.170431001964125);
\draw (3.046875,5.277342304311422)-- (3.09375,5.196152422706632);
\draw (3.328125,5.277342304311422)-- (3.28125,5.196152422706632);
\draw (3.09375,5.358532185916213)-- (3.28125,5.358532185916213);
\draw (3.65625,6.332810765173707)-- (3.84375,6.332810765173707);
\draw (3.609375,6.251620883568916)-- (3.65625,6.170431001964125);
\draw (3.890625,6.251620883568916)-- (3.84375,6.170431001964125);
\draw (3.1171875,5.399127126718609)-- (3.140625,5.358532185916213);
\draw (3.2578125,5.399127126718609)-- (3.234375,5.358532185916213);
\draw (3.140625,5.4397220675210045)-- (3.234375,5.4397220675210045);
\draw (4.171875,5.277342304311422)-- (4.21875,5.196152422706632);
\draw (4.453125,5.277342304311422)-- (4.40625,5.196152422706632);
\draw (4.21875,5.358532185916213)-- (4.40625,5.358532185916213);
\draw (4.2421875,5.399127126718609)-- (4.265625,5.358532185916213);
\draw (4.3828125,5.399127126718609)-- (4.359375,5.358532185916213);
\draw (4.265625,5.4397220675210045)-- (4.359375,5.4397220675210045);
\draw (7.546875,5.277342304311422)-- (7.59375,5.196152422706632);
\draw (7.828125,5.277342304311422)-- (7.78125,5.196152422706632);
\draw (8.953125,5.277342304311422)-- (8.90625,5.196152422706632);
\draw (8.671875,5.277342304311422)-- (8.71875,5.196152422706632);
\draw (8.71875,5.358532185916213)-- (8.90625,5.358532185916213);
\draw (7.59375,5.358532185916213)-- (7.78125,5.358532185916213);
\draw (8.15625,6.332810765173707)-- (8.34375,6.332810765173707);
\draw (8.109375,6.251620883568916)-- (8.15625,6.170431001964125);
\draw (8.390625,6.251620883568916)-- (8.34375,6.170431001964125);
\begin{scriptsize}
\draw [fill=ffqqqq] (6., 7.79423) circle (2.5pt);
\draw [fill=ffqqqq] (1.5, 1.29904) circle (2.5pt);
\draw [fill=ffqqqq] (10.5, 1.29904) circle (2.5pt);
\end{scriptsize}
\end{tikzpicture}

\begin{tikzpicture}[line cap=round,line join=round,>=triangle 45,x=0.3 cm,y=0.3 cm]
\clip(-0.717946985327632,-0.81755092664275) rectangle (12.8592934709513,11.586582306908248);
\draw (0.,0.)-- (6.,10.392304845413264);
\draw (6.,10.392304845413264)-- (12.,0.);
\draw (0.,0.)-- (12.,0.);
\draw (1.5,2.598076211353316)-- (3.,0.);
\draw (10.5,2.598076211353316)-- (9.,0.);
\draw (3.,5.196152422706632)-- (9.,5.196152422706632);
\draw (3.75,6.495190528383289)-- (4.5,5.196152422706632);
\draw (8.25,6.495190528383289)-- (7.5,5.196152422706632);
\draw (4.5,7.794228634059947)-- (7.5,7.794228634059947);
\draw (0.375,0.649519052838329)-- (0.75,0.);
\draw (2.625,0.649519052838329)-- (2.25,0.);
\draw (0.75,1.299038105676658)-- (2.25,1.299038105676658);
\draw (9.75,1.299038105676658)-- (11.25,1.299038105676658);
\draw (9.375,0.649519052838329)-- (9.75,0.);
\draw (11.625,0.649519052838329)-- (11.25,0.);
\draw (4.875,8.443747686898277)-- (5.25,7.794228634059947);
\draw (7.125,8.443747686898277)-- (6.75,7.794228634059947);
\draw (5.25,9.093266739736606)-- (6.75,9.093266739736606);
\draw (3.1875,5.520911949125796)-- (3.375,5.196152422706632);
\draw (4.3125,5.520911949125796)-- (4.125,5.196152422706632);
\draw (7.6875,5.520911949125796)-- (7.875,5.196152422706632);
\draw (8.8125,5.520911949125796)-- (8.625,5.196152422706632);
\draw (3.375,5.84567147554496)-- (4.125,5.84567147554496);
\draw (7.875,5.84567147554496)-- (8.625,5.84567147554496);
\draw (3.46875,6.008051238754542)-- (3.5625,5.84567147554496);
\draw (4.03125,6.008051238754542)-- (3.9375,5.84567147554496);
\draw (8.53125,6.008051238754542)-- (8.4375,5.84567147554496);
\draw (7.96875,6.008051238754542)-- (8.0625,5.84567147554496);
\draw (7.875,5.84567147554496)-- (8.0625,5.84567147554496);
\draw (3.5625,6.170431001964125)-- (3.9375,6.170431001964125);
\draw (8.0625,6.170431001964125)-- (8.4375,6.170431001964125);
\draw (3.046875,5.277342304311422)-- (3.09375,5.196152422706632);
\draw (3.328125,5.277342304311422)-- (3.28125,5.196152422706632);
\draw (3.09375,5.358532185916213)-- (3.28125,5.358532185916213);
\draw (3.65625,6.332810765173707)-- (3.84375,6.332810765173707);
\draw (3.609375,6.251620883568916)-- (3.65625,6.170431001964125);
\draw (3.890625,6.251620883568916)-- (3.84375,6.170431001964125);
\draw (3.1171875,5.399127126718609)-- (3.140625,5.358532185916213);
\draw (3.2578125,5.399127126718609)-- (3.234375,5.358532185916213);
\draw (3.140625,5.4397220675210045)-- (3.234375,5.4397220675210045);
\draw (4.171875,5.277342304311422)-- (4.21875,5.196152422706632);
\draw (4.453125,5.277342304311422)-- (4.40625,5.196152422706632);
\draw (4.21875,5.358532185916213)-- (4.40625,5.358532185916213);
\draw (4.2421875,5.399127126718609)-- (4.265625,5.358532185916213);
\draw (4.3828125,5.399127126718609)-- (4.359375,5.358532185916213);
\draw (4.265625,5.4397220675210045)-- (4.359375,5.4397220675210045);
\draw (7.546875,5.277342304311422)-- (7.59375,5.196152422706632);
\draw (7.828125,5.277342304311422)-- (7.78125,5.196152422706632);
\draw (8.953125,5.277342304311422)-- (8.90625,5.196152422706632);
\draw (8.671875,5.277342304311422)-- (8.71875,5.196152422706632);
\draw (8.71875,5.358532185916213)-- (8.90625,5.358532185916213);
\draw (7.59375,5.358532185916213)-- (7.78125,5.358532185916213);
\draw (8.15625,6.332810765173707)-- (8.34375,6.332810765173707);
\draw (8.109375,6.251620883568916)-- (8.15625,6.170431001964125);
\draw (8.390625,6.251620883568916)-- (8.34375,6.170431001964125);
\begin{scriptsize}
\draw [fill=ffqqqq] (6., 9.09327) circle (2.5pt);
\draw [fill=ffqqqq] (6., 5.84567) circle (2.5pt);
\draw [fill=ffqqqq] (1.5, 1.29904) circle (2.5pt);
\draw [fill=ffqqqq] (10.5, 1.29904) circle (2.5pt);
\end{scriptsize}
\end{tikzpicture}\
 \begin{tikzpicture}[line cap=round,line join=round,>=triangle 45,x=0.3 cm,y=0.3 cm]
\clip(-0.717946985327634,-0.8175509266427525) rectangle (12.859293470951293,11.58658230690825);
\draw (0.,0.)-- (6.,10.392304845413264);
\draw (6.,10.392304845413264)-- (12.,0.);
\draw (0.,0.)-- (12.,0.);
\draw (1.5,2.598076211353316)-- (3.,0.);
\draw (10.5,2.598076211353316)-- (9.,0.);
\draw (3.,5.196152422706632)-- (9.,5.196152422706632);
\draw (3.75,6.495190528383289)-- (4.5,5.196152422706632);
\draw (8.25,6.495190528383289)-- (7.5,5.196152422706632);
\draw (4.5,7.794228634059947)-- (7.5,7.794228634059947);
\draw (0.375,0.649519052838329)-- (0.75,0.);
\draw (2.625,0.649519052838329)-- (2.25,0.);
\draw (0.75,1.299038105676658)-- (2.25,1.299038105676658);
\draw (9.75,1.299038105676658)-- (11.25,1.299038105676658);
\draw (9.375,0.649519052838329)-- (9.75,0.);
\draw (11.625,0.649519052838329)-- (11.25,0.);
\draw (4.875,8.443747686898277)-- (5.25,7.794228634059947);
\draw (7.125,8.443747686898277)-- (6.75,7.794228634059947);
\draw (5.25,9.093266739736606)-- (6.75,9.093266739736606);
\draw (3.1875,5.520911949125796)-- (3.375,5.196152422706632);
\draw (4.3125,5.520911949125796)-- (4.125,5.196152422706632);
\draw (7.6875,5.520911949125796)-- (7.875,5.196152422706632);
\draw (8.8125,5.520911949125796)-- (8.625,5.196152422706632);
\draw (3.375,5.84567147554496)-- (4.125,5.84567147554496);
\draw (7.875,5.84567147554496)-- (8.625,5.84567147554496);
\draw (3.46875,6.008051238754542)-- (3.5625,5.84567147554496);
\draw (4.03125,6.008051238754542)-- (3.9375,5.84567147554496);
\draw (8.53125,6.008051238754542)-- (8.4375,5.84567147554496);
\draw (7.96875,6.008051238754542)-- (8.0625,5.84567147554496);
\draw (7.875,5.84567147554496)-- (8.0625,5.84567147554496);
\draw (3.5625,6.170431001964125)-- (3.9375,6.170431001964125);
\draw (8.0625,6.170431001964125)-- (8.4375,6.170431001964125);
\draw (3.046875,5.277342304311422)-- (3.09375,5.196152422706632);
\draw (3.328125,5.277342304311422)-- (3.28125,5.196152422706632);
\draw (3.09375,5.358532185916213)-- (3.28125,5.358532185916213);
\draw (3.65625,6.332810765173707)-- (3.84375,6.332810765173707);
\draw (3.609375,6.251620883568916)-- (3.65625,6.170431001964125);
\draw (3.890625,6.251620883568916)-- (3.84375,6.170431001964125);
\draw (3.1171875,5.399127126718609)-- (3.140625,5.358532185916213);
\draw (3.2578125,5.399127126718609)-- (3.234375,5.358532185916213);
\draw (3.140625,5.4397220675210045)-- (3.234375,5.4397220675210045);
\draw (4.171875,5.277342304311422)-- (4.21875,5.196152422706632);
\draw (4.453125,5.277342304311422)-- (4.40625,5.196152422706632);
\draw (4.21875,5.358532185916213)-- (4.40625,5.358532185916213);
\draw (4.2421875,5.399127126718609)-- (4.265625,5.358532185916213);
\draw (4.3828125,5.399127126718609)-- (4.359375,5.358532185916213);
\draw (4.265625,5.4397220675210045)-- (4.359375,5.4397220675210045);
\draw (7.546875,5.277342304311422)-- (7.59375,5.196152422706632);
\draw (7.828125,5.277342304311422)-- (7.78125,5.196152422706632);
\draw (8.953125,5.277342304311422)-- (8.90625,5.196152422706632);
\draw (8.671875,5.277342304311422)-- (8.71875,5.196152422706632);
\draw (8.71875,5.358532185916213)-- (8.90625,5.358532185916213);
\draw (7.59375,5.358532185916213)-- (7.78125,5.358532185916213);
\draw (8.15625,6.332810765173707)-- (8.34375,6.332810765173707);
\draw (8.109375,6.251620883568916)-- (8.15625,6.170431001964125);
\draw (8.390625,6.251620883568916)-- (8.34375,6.170431001964125);
\begin{scriptsize}
\draw [fill=ffqqqq] (6., 9.09327) circle (2.5pt);
\draw [fill=ffqqqq] (3.75, 5.84567) circle (2.5pt);
\draw [fill=ffqqqq] (8.25, 5.84567) circle (2.5pt);
\draw [fill=ffqqqq] (1.5, 1.29904) circle (2.5pt);
\draw [fill=ffqqqq] (10.5, 1.29904) circle (2.5pt);
\end{scriptsize}
\end{tikzpicture}\
\begin{tikzpicture}[line cap=round,line join=round,>=triangle 45,x=0.3 cm,y=0.3 cm]
\clip(-0.717946985327634,-0.8175509266427525) rectangle (12.859293470951293,11.58658230690825);
\draw (0.,0.)-- (6.,10.392304845413264);
\draw (6.,10.392304845413264)-- (12.,0.);
\draw (0.,0.)-- (12.,0.);
\draw (1.5,2.598076211353316)-- (3.,0.);
\draw (10.5,2.598076211353316)-- (9.,0.);
\draw (3.,5.196152422706632)-- (9.,5.196152422706632);
\draw (3.75,6.495190528383289)-- (4.5,5.196152422706632);
\draw (8.25,6.495190528383289)-- (7.5,5.196152422706632);
\draw (4.5,7.794228634059947)-- (7.5,7.794228634059947);
\draw (0.375,0.649519052838329)-- (0.75,0.);
\draw (2.625,0.649519052838329)-- (2.25,0.);
\draw (0.75,1.299038105676658)-- (2.25,1.299038105676658);
\draw (9.75,1.299038105676658)-- (11.25,1.299038105676658);
\draw (9.375,0.649519052838329)-- (9.75,0.);
\draw (11.625,0.649519052838329)-- (11.25,0.);
\draw (4.875,8.443747686898277)-- (5.25,7.794228634059947);
\draw (7.125,8.443747686898277)-- (6.75,7.794228634059947);
\draw (5.25,9.093266739736606)-- (6.75,9.093266739736606);
\draw (3.1875,5.520911949125796)-- (3.375,5.196152422706632);
\draw (4.3125,5.520911949125796)-- (4.125,5.196152422706632);
\draw (7.6875,5.520911949125796)-- (7.875,5.196152422706632);
\draw (8.8125,5.520911949125796)-- (8.625,5.196152422706632);
\draw (3.375,5.84567147554496)-- (4.125,5.84567147554496);
\draw (7.875,5.84567147554496)-- (8.625,5.84567147554496);
\draw (3.46875,6.008051238754542)-- (3.5625,5.84567147554496);
\draw (4.03125,6.008051238754542)-- (3.9375,5.84567147554496);
\draw (8.53125,6.008051238754542)-- (8.4375,5.84567147554496);
\draw (7.96875,6.008051238754542)-- (8.0625,5.84567147554496);
\draw (7.875,5.84567147554496)-- (8.0625,5.84567147554496);
\draw (3.5625,6.170431001964125)-- (3.9375,6.170431001964125);
\draw (8.0625,6.170431001964125)-- (8.4375,6.170431001964125);
\draw (3.046875,5.277342304311422)-- (3.09375,5.196152422706632);
\draw (3.328125,5.277342304311422)-- (3.28125,5.196152422706632);
\draw (3.09375,5.358532185916213)-- (3.28125,5.358532185916213);
\draw (3.65625,6.332810765173707)-- (3.84375,6.332810765173707);
\draw (3.609375,6.251620883568916)-- (3.65625,6.170431001964125);
\draw (3.890625,6.251620883568916)-- (3.84375,6.170431001964125);
\draw (3.1171875,5.399127126718609)-- (3.140625,5.358532185916213);
\draw (3.2578125,5.399127126718609)-- (3.234375,5.358532185916213);
\draw (3.140625,5.4397220675210045)-- (3.234375,5.4397220675210045);
\draw (4.171875,5.277342304311422)-- (4.21875,5.196152422706632);
\draw (4.453125,5.277342304311422)-- (4.40625,5.196152422706632);
\draw (4.21875,5.358532185916213)-- (4.40625,5.358532185916213);
\draw (4.2421875,5.399127126718609)-- (4.265625,5.358532185916213);
\draw (4.3828125,5.399127126718609)-- (4.359375,5.358532185916213);
\draw (4.265625,5.4397220675210045)-- (4.359375,5.4397220675210045);
\draw (7.546875,5.277342304311422)-- (7.59375,5.196152422706632);
\draw (7.828125,5.277342304311422)-- (7.78125,5.196152422706632);
\draw (8.953125,5.277342304311422)-- (8.90625,5.196152422706632);
\draw (8.671875,5.277342304311422)-- (8.71875,5.196152422706632);
\draw (8.71875,5.358532185916213)-- (8.90625,5.358532185916213);
\draw (7.59375,5.358532185916213)-- (7.78125,5.358532185916213);
\draw (8.15625,6.332810765173707)-- (8.34375,6.332810765173707);
\draw (8.109375,6.251620883568916)-- (8.15625,6.170431001964125);
\draw (8.390625,6.251620883568916)-- (8.34375,6.170431001964125);
\begin{scriptsize}
\draw [fill=ffqqqq] (6., 9.74279) circle (2.5pt);
\draw [fill=ffqqqq] (6., 8.11899) circle (2.5pt);
\draw [fill=ffqqqq] (3.75, 5.84567) circle (2.5pt);
\draw [fill=ffqqqq] (8.25, 5.84567) circle (2.5pt);
\draw [fill=ffqqqq] (1.5, 1.29904) circle (2.5pt);
\draw [fill=ffqqqq] (10.5, 1.29904) circle (2.5pt);
\end{scriptsize}
\end{tikzpicture}
\caption{Optimal configuration of $n$ points for $1\leq n\leq 6$.} \label{Fig1}
\end{figure}

\begin{figure}
\begin{tikzpicture}[line cap=round,line join=round,>=triangle 45,x=0.3 cm,y=0.3 cm]
\clip(-0.717946985327632,-0.81755092664275) rectangle (12.8592934709513,11.586582306908248);
\draw (0.,0.)-- (6.,10.392304845413264);
\draw (6.,10.392304845413264)-- (12.,0.);
\draw (0.,0.)-- (12.,0.);
\draw (1.5,2.598076211353316)-- (3.,0.);
\draw (10.5,2.598076211353316)-- (9.,0.);
\draw (3.,5.196152422706632)-- (9.,5.196152422706632);
\draw (3.75,6.495190528383289)-- (4.5,5.196152422706632);
\draw (8.25,6.495190528383289)-- (7.5,5.196152422706632);
\draw (4.5,7.794228634059947)-- (7.5,7.794228634059947);
\draw (0.375,0.649519052838329)-- (0.75,0.);
\draw (2.625,0.649519052838329)-- (2.25,0.);
\draw (0.75,1.299038105676658)-- (2.25,1.299038105676658);
\draw (9.75,1.299038105676658)-- (11.25,1.299038105676658);
\draw (9.375,0.649519052838329)-- (9.75,0.);
\draw (11.625,0.649519052838329)-- (11.25,0.);
\draw (4.875,8.443747686898277)-- (5.25,7.794228634059947);
\draw (7.125,8.443747686898277)-- (6.75,7.794228634059947);
\draw (5.25,9.093266739736606)-- (6.75,9.093266739736606);
\draw (3.1875,5.520911949125796)-- (3.375,5.196152422706632);
\draw (4.3125,5.520911949125796)-- (4.125,5.196152422706632);
\draw (7.6875,5.520911949125796)-- (7.875,5.196152422706632);
\draw (8.8125,5.520911949125796)-- (8.625,5.196152422706632);
\draw (3.375,5.84567147554496)-- (4.125,5.84567147554496);
\draw (7.875,5.84567147554496)-- (8.625,5.84567147554496);
\draw (3.46875,6.008051238754542)-- (3.5625,5.84567147554496);
\draw (4.03125,6.008051238754542)-- (3.9375,5.84567147554496);
\draw (8.53125,6.008051238754542)-- (8.4375,5.84567147554496);
\draw (7.96875,6.008051238754542)-- (8.0625,5.84567147554496);
\draw (7.875,5.84567147554496)-- (8.0625,5.84567147554496);
\draw (3.5625,6.170431001964125)-- (3.9375,6.170431001964125);
\draw (8.0625,6.170431001964125)-- (8.4375,6.170431001964125);
\draw (3.046875,5.277342304311422)-- (3.09375,5.196152422706632);
\draw (3.328125,5.277342304311422)-- (3.28125,5.196152422706632);
\draw (3.09375,5.358532185916213)-- (3.28125,5.358532185916213);
\draw (3.65625,6.332810765173707)-- (3.84375,6.332810765173707);
\draw (3.609375,6.251620883568916)-- (3.65625,6.170431001964125);
\draw (3.890625,6.251620883568916)-- (3.84375,6.170431001964125);
\draw (3.1171875,5.399127126718609)-- (3.140625,5.358532185916213);
\draw (3.2578125,5.399127126718609)-- (3.234375,5.358532185916213);
\draw (3.140625,5.4397220675210045)-- (3.234375,5.4397220675210045);
\draw (4.171875,5.277342304311422)-- (4.21875,5.196152422706632);
\draw (4.453125,5.277342304311422)-- (4.40625,5.196152422706632);
\draw (4.21875,5.358532185916213)-- (4.40625,5.358532185916213);
\draw (4.2421875,5.399127126718609)-- (4.265625,5.358532185916213);
\draw (4.3828125,5.399127126718609)-- (4.359375,5.358532185916213);
\draw (4.265625,5.4397220675210045)-- (4.359375,5.4397220675210045);
\draw (7.546875,5.277342304311422)-- (7.59375,5.196152422706632);
\draw (7.828125,5.277342304311422)-- (7.78125,5.196152422706632);
\draw (8.953125,5.277342304311422)-- (8.90625,5.196152422706632);
\draw (8.671875,5.277342304311422)-- (8.71875,5.196152422706632);
\draw (8.71875,5.358532185916213)-- (8.90625,5.358532185916213);
\draw (7.59375,5.358532185916213)-- (7.78125,5.358532185916213);
\draw (8.15625,6.332810765173707)-- (8.34375,6.332810765173707);
\draw (8.109375,6.251620883568916)-- (8.15625,6.170431001964125);
\draw (8.390625,6.251620883568916)-- (8.34375,6.170431001964125);
\begin{scriptsize}
\draw [fill=ffqqqq] (6., 9.74279) circle (2.5pt);
\draw [fill=ffqqqq] (6., 8.11899) circle (2.5pt);
\draw [fill=ffqqqq] (3.75, 5.84567) circle (2.5pt);
\draw [fill=ffqqqq] (8.25, 5.84567) circle (2.5pt);
\draw [fill=ffqqqq] (1.5, 1.94856) circle (2.5pt);
\draw [fill=ffqqqq] (1.5, 0.32476) circle (2.5pt);
\draw [fill=ffqqqq] (10.5, 1.29904) circle (2.5pt);
\end{scriptsize}
\end{tikzpicture}\
 \begin{tikzpicture}[line cap=round,line join=round,>=triangle 45,x=0.3 cm,y=0.3 cm]
\clip(-0.717946985327634,-0.8175509266427525) rectangle (12.859293470951293,11.58658230690825);
\draw (0.,0.)-- (6.,10.392304845413264);
\draw (6.,10.392304845413264)-- (12.,0.);
\draw (0.,0.)-- (12.,0.);
\draw (1.5,2.598076211353316)-- (3.,0.);
\draw (10.5,2.598076211353316)-- (9.,0.);
\draw (3.,5.196152422706632)-- (9.,5.196152422706632);
\draw (3.75,6.495190528383289)-- (4.5,5.196152422706632);
\draw (8.25,6.495190528383289)-- (7.5,5.196152422706632);
\draw (4.5,7.794228634059947)-- (7.5,7.794228634059947);
\draw (0.375,0.649519052838329)-- (0.75,0.);
\draw (2.625,0.649519052838329)-- (2.25,0.);
\draw (0.75,1.299038105676658)-- (2.25,1.299038105676658);
\draw (9.75,1.299038105676658)-- (11.25,1.299038105676658);
\draw (9.375,0.649519052838329)-- (9.75,0.);
\draw (11.625,0.649519052838329)-- (11.25,0.);
\draw (4.875,8.443747686898277)-- (5.25,7.794228634059947);
\draw (7.125,8.443747686898277)-- (6.75,7.794228634059947);
\draw (5.25,9.093266739736606)-- (6.75,9.093266739736606);
\draw (3.1875,5.520911949125796)-- (3.375,5.196152422706632);
\draw (4.3125,5.520911949125796)-- (4.125,5.196152422706632);
\draw (7.6875,5.520911949125796)-- (7.875,5.196152422706632);
\draw (8.8125,5.520911949125796)-- (8.625,5.196152422706632);
\draw (3.375,5.84567147554496)-- (4.125,5.84567147554496);
\draw (7.875,5.84567147554496)-- (8.625,5.84567147554496);
\draw (3.46875,6.008051238754542)-- (3.5625,5.84567147554496);
\draw (4.03125,6.008051238754542)-- (3.9375,5.84567147554496);
\draw (8.53125,6.008051238754542)-- (8.4375,5.84567147554496);
\draw (7.96875,6.008051238754542)-- (8.0625,5.84567147554496);
\draw (7.875,5.84567147554496)-- (8.0625,5.84567147554496);
\draw (3.5625,6.170431001964125)-- (3.9375,6.170431001964125);
\draw (8.0625,6.170431001964125)-- (8.4375,6.170431001964125);
\draw (3.046875,5.277342304311422)-- (3.09375,5.196152422706632);
\draw (3.328125,5.277342304311422)-- (3.28125,5.196152422706632);
\draw (3.09375,5.358532185916213)-- (3.28125,5.358532185916213);
\draw (3.65625,6.332810765173707)-- (3.84375,6.332810765173707);
\draw (3.609375,6.251620883568916)-- (3.65625,6.170431001964125);
\draw (3.890625,6.251620883568916)-- (3.84375,6.170431001964125);
\draw (3.1171875,5.399127126718609)-- (3.140625,5.358532185916213);
\draw (3.2578125,5.399127126718609)-- (3.234375,5.358532185916213);
\draw (3.140625,5.4397220675210045)-- (3.234375,5.4397220675210045);
\draw (4.171875,5.277342304311422)-- (4.21875,5.196152422706632);
\draw (4.453125,5.277342304311422)-- (4.40625,5.196152422706632);
\draw (4.21875,5.358532185916213)-- (4.40625,5.358532185916213);
\draw (4.2421875,5.399127126718609)-- (4.265625,5.358532185916213);
\draw (4.3828125,5.399127126718609)-- (4.359375,5.358532185916213);
\draw (4.265625,5.4397220675210045)-- (4.359375,5.4397220675210045);
\draw (7.546875,5.277342304311422)-- (7.59375,5.196152422706632);
\draw (7.828125,5.277342304311422)-- (7.78125,5.196152422706632);
\draw (8.953125,5.277342304311422)-- (8.90625,5.196152422706632);
\draw (8.671875,5.277342304311422)-- (8.71875,5.196152422706632);
\draw (8.71875,5.358532185916213)-- (8.90625,5.358532185916213);
\draw (7.59375,5.358532185916213)-- (7.78125,5.358532185916213);
\draw (8.15625,6.332810765173707)-- (8.34375,6.332810765173707);
\draw (8.109375,6.251620883568916)-- (8.15625,6.170431001964125);
\draw (8.390625,6.251620883568916)-- (8.34375,6.170431001964125);
\begin{scriptsize}
\draw [fill=ffqqqq] (6., 9.74279) circle (2.5pt);
\draw [fill=ffqqqq] (6., 8.11899) circle (2.5pt);
\draw [fill=ffqqqq] (3.75, 5.84567) circle (2.5pt);
\draw [fill=ffqqqq] (8.25, 5.84567) circle (2.5pt);
\draw [fill=ffqqqq] (1.5, 1.29904) circle (2.5pt);
\draw [fill=ffqqqq] (10.5, 1.94856) circle (2.5pt);
\draw [fill=ffqqqq] (10.5, 0.32476) circle (2.5pt);
\end{scriptsize}
\end{tikzpicture}
\caption{Optimal configuration of $n$ points for $n=7$.} \label{Fig2}
\end{figure}
\begin{figure} 
 \begin{tikzpicture}[line cap=round,line join=round,>=triangle 45,x=0.3 cm,y=0.3 cm]
\clip(-0.717946985327634,-0.8175509266427525) rectangle (12.859293470951293,11.58658230690825);
\draw (0.,0.)-- (6.,10.392304845413264);
\draw (6.,10.392304845413264)-- (12.,0.);
\draw (0.,0.)-- (12.,0.);
\draw (1.5,2.598076211353316)-- (3.,0.);
\draw (10.5,2.598076211353316)-- (9.,0.);
\draw (3.,5.196152422706632)-- (9.,5.196152422706632);
\draw (3.75,6.495190528383289)-- (4.5,5.196152422706632);
\draw (8.25,6.495190528383289)-- (7.5,5.196152422706632);
\draw (4.5,7.794228634059947)-- (7.5,7.794228634059947);
\draw (0.375,0.649519052838329)-- (0.75,0.);
\draw (2.625,0.649519052838329)-- (2.25,0.);
\draw (0.75,1.299038105676658)-- (2.25,1.299038105676658);
\draw (9.75,1.299038105676658)-- (11.25,1.299038105676658);
\draw (9.375,0.649519052838329)-- (9.75,0.);
\draw (11.625,0.649519052838329)-- (11.25,0.);
\draw (4.875,8.443747686898277)-- (5.25,7.794228634059947);
\draw (7.125,8.443747686898277)-- (6.75,7.794228634059947);
\draw (5.25,9.093266739736606)-- (6.75,9.093266739736606);
\draw (3.1875,5.520911949125796)-- (3.375,5.196152422706632);
\draw (4.3125,5.520911949125796)-- (4.125,5.196152422706632);
\draw (7.6875,5.520911949125796)-- (7.875,5.196152422706632);
\draw (8.8125,5.520911949125796)-- (8.625,5.196152422706632);
\draw (3.375,5.84567147554496)-- (4.125,5.84567147554496);
\draw (7.875,5.84567147554496)-- (8.625,5.84567147554496);
\draw (3.46875,6.008051238754542)-- (3.5625,5.84567147554496);
\draw (4.03125,6.008051238754542)-- (3.9375,5.84567147554496);
\draw (8.53125,6.008051238754542)-- (8.4375,5.84567147554496);
\draw (7.96875,6.008051238754542)-- (8.0625,5.84567147554496);
\draw (7.875,5.84567147554496)-- (8.0625,5.84567147554496);
\draw (3.5625,6.170431001964125)-- (3.9375,6.170431001964125);
\draw (8.0625,6.170431001964125)-- (8.4375,6.170431001964125);
\draw (3.046875,5.277342304311422)-- (3.09375,5.196152422706632);
\draw (3.328125,5.277342304311422)-- (3.28125,5.196152422706632);
\draw (3.09375,5.358532185916213)-- (3.28125,5.358532185916213);
\draw (3.65625,6.332810765173707)-- (3.84375,6.332810765173707);
\draw (3.609375,6.251620883568916)-- (3.65625,6.170431001964125);
\draw (3.890625,6.251620883568916)-- (3.84375,6.170431001964125);
\draw (3.1171875,5.399127126718609)-- (3.140625,5.358532185916213);
\draw (3.2578125,5.399127126718609)-- (3.234375,5.358532185916213);
\draw (3.140625,5.4397220675210045)-- (3.234375,5.4397220675210045);
\draw (4.171875,5.277342304311422)-- (4.21875,5.196152422706632);
\draw (4.453125,5.277342304311422)-- (4.40625,5.196152422706632);
\draw (4.21875,5.358532185916213)-- (4.40625,5.358532185916213);
\draw (4.2421875,5.399127126718609)-- (4.265625,5.358532185916213);
\draw (4.3828125,5.399127126718609)-- (4.359375,5.358532185916213);
\draw (4.265625,5.4397220675210045)-- (4.359375,5.4397220675210045);
\draw (7.546875,5.277342304311422)-- (7.59375,5.196152422706632);
\draw (7.828125,5.277342304311422)-- (7.78125,5.196152422706632);
\draw (8.953125,5.277342304311422)-- (8.90625,5.196152422706632);
\draw (8.671875,5.277342304311422)-- (8.71875,5.196152422706632);
\draw (8.71875,5.358532185916213)-- (8.90625,5.358532185916213);
\draw (7.59375,5.358532185916213)-- (7.78125,5.358532185916213);
\draw (8.15625,6.332810765173707)-- (8.34375,6.332810765173707);
\draw (8.109375,6.251620883568916)-- (8.15625,6.170431001964125);
\draw (8.390625,6.251620883568916)-- (8.34375,6.170431001964125);
\begin{scriptsize}
\draw [fill=ffqqqq] (6., 9.74279) circle (2.5pt);
\draw [fill=ffqqqq] (6., 8.11899) circle (2.5pt);
\draw [fill=ffqqqq] (3.75, 5.84567) circle (2.5pt);
\draw [fill=ffqqqq] (8.25, 5.84567) circle (2.5pt);
\draw [fill=ffqqqq] (1.5, 1.94856) circle (2.5pt);
\draw [fill=ffqqqq] (1.5, 0.32476) circle (2.5pt);
\draw [fill=ffqqqq] (10.5, 1.94856) circle (2.5pt);
\draw [fill=ffqqqq] (10.5, 0.32476) circle (2.5pt);
\end{scriptsize}
\end{tikzpicture}
\caption{Optimal configuration of $n$ points for $n=8$.}\label{Fig3}
\end{figure}
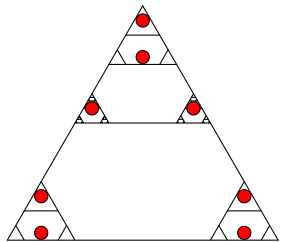
\section{Optimal sets of $n$-means for all $n\geq 2$}
 In this section, let us first prove the following proposition.
\begin{prop}\label{prop1} The set $\ga:=\set{a(1, 2), a(3)}$, where  $a(1, 2)=(\frac{1}{2},\frac{\sqrt{3}}{16})$ and $a(3)=(\frac{1}{2},\frac{3 \sqrt{3}}{8})$, is an optimal set of two-means with quantization error $V_2=\frac{117}{1408}\approx0.0830966$.
\end{prop}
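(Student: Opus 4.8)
The plan is to prove the statement in two halves: an upper bound $V_2\le\frac{117}{1408}$ realized by the displayed set $\ga$, and a matching lower bound showing no pair of points does better. For the upper bound I would first confirm that $a(1,2)$ and $a(3)$ are the centroids of $\tri_1\uu\tri_2$ and of $\tri_3$ respectively. By the symmetry of $P$ about the line $x_1=\frac12$ recorded in Note~\ref{note1}, both centroids have first coordinate $\frac12$; the vertical coordinate of $a(3)$ is the second component of $S_3(\frac12,\frac{\sqrt3}{4})$, while on $\tri_1\uu\tri_2$ the conditional law of $X_2$ is that of $\frac14 X_2$, so its conditional mean is $\frac14\cdot\frac{\sqrt3}{4}=\frac{\sqrt3}{16}$. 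Next I would check that $\ga$ is a centroidal Voronoi configuration: the perpendicular bisector of $a(1,2)$ and $a(3)$ is the horizontal line $x_2=\frac{7\sqrt3}{32}$, which lies strictly above $\tri_1\uu\tri_2\subseteq\{x_2\le\frac{\sqrt3}{8}\}$ and strictly below $\tri_3\subseteq\{x_2\ge\frac{\sqrt3}{4}\}$, so the Voronoi regions of $a(1,2)$ and $a(3)$ contain $\tri_1\uu\tri_2$ and $\tri_3$ up to a $P$-null set. Finally, decomposing $\int=\int_{\tri_1}+\int_{\tri_2}+\int_{\tri_3}$ and using \eqref{eq1} with $V=\frac{27}{176}$ from Lemma~\ref{lemma333} and $a(1)=(\tfrac18,\tfrac{\sqrt3}{16})$, $a(2)=(\tfrac78,\tfrac{\sqrt3}{16})$ (so $\|a(i)-a(1,2)\|^2=\tfrac9{64}$),
\begin{align*}
V(P;\ga)&=\sum_{i=1}^2 p_i\Big(s_i^2V+\|a(i)-a(1,2)\|^2\Big)+p_3\,s_3^2V\\
&=\tfrac25\Big(\tfrac1{16}\cdot\tfrac{27}{176}+\tfrac9{64}\Big)+\tfrac35\cdot\tfrac14\cdot\tfrac{27}{176}=\tfrac{117}{1408},
\end{align*}
a finite routine computation giving $V_2\le\frac{117}{1408}$.

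For the lower bound, let $\gb=\{p,q\}$ be any optimal set of two-means (it has exactly two points since $P$ is continuous). By Proposition~\ref{prop10}, $p$ and $q$ are the centroids of their Voronoi regions, and those regions are exactly the two closed half-planes $H_1,H_2$ determined by the perpendicular bisector $\ell$ of $p$ and $q$; hence $V_2=V(P;\gb)=\int_{H_1}\|x-c_1\|^2\,dP+\int_{H_2}\|x-c_2\|^2\,dP$, where $c_i$ is the centroid of $H_i$ with respect to $P$. Thus it suffices to show that for \emph{every} line $\ell$ this quantity is at least $\frac{117}{1408}$. I would do this by a case analysis on how $\ell$ meets the three first-level triangles: (i) $\ell$ separates $\tri_3$ from $\tri_1\uu\tri_2$; (ii) $\ell$ separates one of $\tri_1,\tri_2$ from the union of the other two; (iii) $\ell$ cuts through $\tri_3$; (iv) $\ell$ cuts through $\tri_1$ or $\tri_2$. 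The reflection $\sigma\colon(x_1,x_2)\mapsto(1-x_1,x_2)$ preserves $P$, so cases related by $\sigma$ need only be treated once. In case (i) the two region errors do not depend on the particular separating line and sum to exactly $\frac{117}{1408}$, the value realized by $\ga$. In cases (ii)--(iv) I would lower-bound $\int_{H_1}\|x-c_1\|^2\,dP+\int_{H_2}\|x-c_2\|^2\,dP$ by retaining only the contribution of one first-level (or, where necessary, one second-level) triangle that is split by $\ell$ or lies far from its region's centroid, invoking \eqref{eq1}, and checking the resulting number exceeds $\frac{117}{1408}$. Together with the upper bound this yields $V_2=\frac{117}{1408}$ and the optimality of $\ga$.

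I expect the case analysis in the lower bound to be the main obstacle. One must rule out every ``tilted'' separating line as well as every unbalanced split of the three triangles, and in particular show that cutting the heavy, geometrically compact piece $\tri_3$ (carrying probability $p_3=\frac35$) into two Voronoi regions is always more costly than $\frac{117}{1408}$; this requires a careful choice of which sub-piece to retain in each estimate. The symmetry about $x_1=\frac12$ and the concentration of mass on $\tri_3$ are what keep this bookkeeping finite and tractable. The upper-bound half, by contrast, is a direct consequence of Note~\ref{note1}, formula~\eqref{eq1}, and Lemma~\ref{lemma333}.
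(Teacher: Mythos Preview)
Your upper-bound half coincides with the paper's opening step. For the lower bound the paper takes a different route from your separating-line case split. Instead of classifying how a general bisector $\ell$ meets $\tri_1,\tri_2,\tri_3$, the paper first uses the reflection symmetry about $x_1=\frac12$ to argue that among all non-horizontal bisectors the vertical one $x_1=\frac12$ minimises the cost; that cost computes to $\frac{927}{8624}>V_2$, so the optimal bisector must be horizontal and both optimal points lie on the line $x_1=\frac12$. This collapses the problem to a one-parameter search in the heights $b_1\le b_2$, which the paper finishes by a short chain of explicit numerical estimates (bounding $b_2$ into $[\frac{5\sqrt3}{16},\frac{7\sqrt3}{16}]$ and then ruling out any overlap of the lower Voronoi region with $\tri_3$).

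Your direct case analysis is sound in principle and arguably more transparent than the paper's somewhat informal symmetry reduction, but be aware that your cases (iii)--(iv) are where the real work hides: when $\ell$ slices $\tri_3$ just above $x_2=\frac{\sqrt3}{4}$ the cost exceeds $\frac{117}{1408}$ by an arbitrarily small amount, so a crude ``retain one sub-triangle'' lower bound will not separate it from the optimum, and you will in effect be forced into the same one-parameter height analysis the paper carries out. The paper's symmetry step buys exactly the elimination of all tilted $\ell$ in one stroke, at the price of a heuristic appeal to ``maximum symmetry''; your approach buys a cleaner logical structure, at the price of having to handle the near-optimal horizontal cuts by hand.
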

\begin{proof} Let us consider the set of two elements $\gb$ given by $\gb:=\set{a(1, 2), a(3)}=\set{(\frac{1}{2},\frac{\sqrt{3}}{16}), (\frac{1}{2},\frac{3 \sqrt{3}}{8})}$. Then, $\tri_1\uu \tri_2\sci M(a(1,2)|\gb)$ and $\tri_3\sci M(a(3)|\gb)$, and so the distortion error due to the set $\gb$ is given by
\[\int\min_{b\in \gb}\|x-b\|^2 dP=\mathop{\int}\limits_{\tri_1\uu \tri_2}\|x-a(1,2)\|^2 dP+\mathop{\int}\limits_{\tri_3}\|x-a(3)\|^2 dP=\frac{117}{1408}\approx0.0830966.\]
Since $V_2$ is the quantization error for two-means, we have $V_2\leq 0.0830966$.
Notice that $x_1=\frac 12$ is the only line of maximum symmetry of the stretched Sierpi\'{n}ski triangle. Let $\ga:=\set{(a, b), (c, d)}$ be an optimal set of two-means. 

Then, one of the following two possible cases will happen.

\tit{Case~1. $(a, b)$ and $(c, d)$ are symmetrically located on both sides of the line $x_1=\frac 12$}, or \tit{Case~2. $(a, b)$ and $(c, d)$ lie on the vertical line $x_1=\frac 12$.}

Suppose that Case~1 happens. Then, recall that the elements in an optimal set are the conditional expectations in their own Voronoi regions. Write $A:=\tri_1\uu \tri_{31}\uu \tri_{331}\uu \tri_{3331}\uu \cdots$ and $B:=\tri_2\uu \tri_{32}\uu \tri_{332}\uu \tri_{3332}\uu \cdots$. See Figure~\ref{Fig0}, to understand how the basic triangles have been chosen. Then, 
\begin{align*}
&(a, b)=E(X : X \in A)=(\frac{2}{7},0.433013),  \te{ and }  (c, d)=E(X : X \in B)=(\frac{5}{7},0.433013),
\end{align*}
which yields the distortion error as
\begin{equation*} \label{eq100} \int \min_{q\in \set{(a, b), \, (c, d)}}\|x-q\|^2 dP=\mathop{\int}\limits_A \|x-(a, b)\|^2 dP+\mathop{\int}\limits_B \|x-(c, d)\|^2 dP=\frac{927}{8624}\approx0.107491>V_2,\end{equation*}
which leads to a contradiction. Hence,  Case~1 cannot happen, i.e., the only possible case is Case~2, i.e., we can assume that $(a, b)$ and $(c, d)$ lie on the vertical line $x_1=\frac 12$. 

In this case, we have $a=c=\frac 12$, i.e., $\ga=\{(\frac 12, b), (\frac 12, d)\}$. Without any loss of generality, we can assume that $b<d$. Since the optimal elements are the centroids of their own Voronoi regions, we have $\ga\sci \tri$. Moreover, by the properties of centroids, we have
\[(\frac 12, b) P(M((\frac 12, b)|\ga))+(\frac 12, d) P(M((p,d)|\ga))=(\frac 12, \frac {\sqrt{3}} 4),\]
which implies that $b  P(M((p, b )|\ga))+ dP(M((p, d)|\ga))=\frac {\sqrt{3}} 4$. Thus, it follows that the two optimal elements  $(\frac 12, b)$ and $(\frac 12, d)$ lie on the opposite sides of the point $(\frac 12, \frac {\sqrt{3}} 4)$, and so we have 
\begin{equation} \label{eqM1} 0<b \leq  \frac {\sqrt{3}} 4\leq d<\frac {\sqrt 3}{2}.
\end{equation} 
 If the Voronoi region of $(\frac 12, d)$ contains elements from $\tri_1\uu \tri_2$, by Figure~\ref{Fig0}, we must have $\frac 12(b+d)<\frac {\sqrt 3} {8}$, i.e.,  $b<\frac {\sqrt 3}{4}-d<0$, which contradicts \eqref{eqM1}. Hence, the Voronoi region of $(\frac 12, d)$ does not contain any element from $\tri_1\uu \tri_2$. Again, notice that 
 \[a(1, 2)=(\frac{1}{2},\frac{\sqrt{3}}{16}).\]
 Hence, $\frac{\sqrt{3}}{16}\leq b \leq  \frac {\sqrt{3}} 4\leq d<\frac {\sqrt 3}{2}$.  Notice that $b\leq \frac{\sqrt{3}}{4}$ implies $\frac 12(b+d)\leq \frac 12(\frac{\sqrt{3}}{4}+\frac {\sqrt 3}{2})=\frac{3\sqrt 3}{8}$, and so $\tri_{33}\sci M((\frac 12, d)|\ga)$.
  % yielding $d\leq \frac{7 \sqrt{3}}{16}$, where $x_2$-coordinate of $\tri_{33}$ is $\frac{7 \sqrt{3}}{16}$. 
 We now show that the Voronoi region of $(\frac 12, b)$ does not contain any element from $\tri_3$. For the sake of contradiction, assume that the Voronoi region of $(\frac 12, b)$ contains elements from $\tri_3$, in other words, we can say that 
\begin{equation*} \label{eqM423}  M((\frac 12, b )|\ga)\ii \tri_3\neq \es \te{ with } P(M((\frac 12, b )|\ga)\ii\tri_{3})>0, 
\end{equation*} 
equivalently, we can say that 
\begin{equation} \label{Megha567} \frac {\sqrt{3}} 4< \frac 12(b+d)\leq \frac{3\sqrt 3}{8}.
\end{equation} 

Suppose that $\frac {5\sqrt 3} {16} \leq \frac 12(b+d)\leq \frac{3\sqrt 3}{8}$. 
 Then,  
  \begin{align*}
& \int\min_{c\in\ga}\|x-c\|^2 dP\geq  \mathop{\int}\limits_{\tri_{33}} \|x-a(33)\|^2 dP+\mathop{\int}\limits_{\tri_{1}\uu \tri_{2}\uu \tri_{31}\uu\tri_{32}}  \|x-(\frac{1}{2}, a(1, 2, 31, 32))\|^2 dP\\
&=\frac{4239}{45056}\approx 0.0940829>V_2,
\end{align*}
 which leads to a contradiction. Hence, \eqref{Megha567} reduces to 
 \begin{equation} \label{Megha568} \frac {\sqrt{3}} 4<\frac 12(b+d)\leq \frac {5\sqrt 3} {16}.
\end{equation} 
Next, suppose that  $  \frac{37 \sqrt{3}}{128}<\frac 12(b+d)\leq \frac{5 \sqrt{3}}{16}$, where $S_{3131}(\frac{1}{2},\frac{\sqrt{3}}{2})=(\frac{37}{128},\frac{37 \sqrt{3}}{128})$ and $S_{31}(\frac 12,\frac {\sqrt 3}{2})=(\frac{5}{16},\frac{5 \sqrt{3}}{16})$. Then, we have 
  \begin{align*}
  \int\min_{c\in\ga}\|x-c\|^2 dP&\geq  \mathop{\int}\limits_{\tri_{33}} \|x-a(33)\|^2 dP\\
&\qquad +\mathop{\int}\limits_{\tri_{1}\uu \tri_{2}\uu \tri_{311}\uu \tri_{312}\uu\tri_{321}\uu\tri_{322} \uu\tri_{3131\uu\tri_{3132}\uu\tri_{3231}\uu\tri_{3232}}}  \|x-Av\|^2 dP\\
&=\frac{812073627}{9743360000}\approx 0.0833464>V_2,
\end{align*}
where $Av=a(1, 2, 311, 312, 321, 322, 3131, 3132, 3231, 3232)$, which leads to a contradiction. 
Hence, \eqref{Megha568} reduces to 
 \begin{equation*} \label{Megha569} \frac {\sqrt{3}} 4<\frac 12(b+d)\leq \frac{37 \sqrt{3}}{128}.
\end{equation*} 
Proceeding in this way, we can show that as long as $\frac {\sqrt{3}} 4<\frac 12(b+d)$, a contradiction arises. Thus, we can deduce that $\frac 12(b+d)\leq \frac {\sqrt{3}} 4$, i.e.,  we can assume that the Voronoi region of $(\frac 12, b)$ does not contain any element from $\tri_3$ yielding
$(\frac 12, b)=a(1, 2)=(\frac{1}{2},\frac{\sqrt{3}}{16})$ and $(\frac 12, d)=a(3)=(\frac{1}{2},\frac{3 \sqrt{3}}{8})$. Hence, the set $\ga=\set{a(1, 2), a(3)}$ is an optimal set of two-means with quantization error $V_2=\frac{117}{1408}\approx0.0830966$, which is the proposition.
\end{proof}

\begin{remark} The set $\ga$ in the above proposition is a unique optimal set of two-means.
\end{remark}

Let us now prove the following proposition.

\begin{prop}\label{prop2}
Let $\ga$ be an optimal set of three-means. Then, $\ga=\set{a(1), a(2), a(3)}$ and $V_3=\frac{189}{7040}\approx0.0268466$, where $a(1)=(\frac{1}{8},\frac{\sqrt{3}}{16})$, $a(2)=(\frac{7}{8},\frac{\sqrt{3}}{16})$, and $a(3)=(\frac{1}{2},\frac{3 \sqrt{3}}{8})$. Moreover, the Voronoi region of the element $\ga\ii \tri_i$ does not contain any element from $\tri_j$ for all $1\leq j\neq i\leq 3$.
\end{prop}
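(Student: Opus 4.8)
The plan is to follow the same template as in Proposition~\ref{prop1}: first exhibit an explicit candidate set and compute its distortion to get an upper bound $V_3 \le \tfrac{189}{7040}$, and then argue that no optimal set of three-means can do better or even equally well with a different configuration. For the candidate, take $\ga = \set{a(1), a(2), a(3)}$ with $a(i) = S_i(\tfrac12,\tfrac{\sqrt3}{4})$; since $\tri_1, \tri_2, \tri_3$ are separated and symmetric about $x_1 = \tfrac12$, the Voronoi region of $a(i)$ contains exactly $\tri_i$, so by \eqref{eq1} the distortion is $\sum_{i=1}^3 p_i(s_i^2 V) = \tfrac15(\tfrac1{16})V + \tfrac15(\tfrac1{16})V + \tfrac35(\tfrac14)V = \tfrac{7}{176}V$, and plugging $V = \tfrac{27}{176}$ gives $\tfrac{189}{30976}$; if this does not match, one instead computes $\int_{\tri_i}\|x-a(i)\|^2\,dP$ directly — in any case this is the routine upper-bound computation and yields $V_3 \le \tfrac{189}{7040}$.

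For the lower bound I would first use the maximum symmetry of the R-triangle about $x_1 = \tfrac12$ (Note~\ref{note1}) exactly as in Proposition~\ref{prop1}: any optimal set of three-means must be symmetric under reflection in this line, so either it contains a point on the line and a symmetric pair off it, or all three points lie on the line. The "all three on the line" case is ruled out by comparing with the candidate distortion (the marginal-$P_2$ quantization with three points, combined with the wasted horizontal variance $V(X_1) = \tfrac3{44}$, already exceeds $\tfrac{189}{7040}$ — a short explicit estimate). So an optimal set has the form $\set{(\tfrac12, c), (u, b), (1-u, b)}$. Next I would localize the points: the symmetric pair must be the centroids of $\tri_1$ and $\tri_2$ respectively (which gives $u = \tfrac18$, $b = \tfrac{\sqrt3}{16}$) and the single point must be $a(3)$. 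This is done by the same kind of trapping argument as in Proposition~\ref{prop1}: show that the Voronoi boundary between the pair and the top point cannot dip below the line $x_2 = \tfrac{\sqrt3}{8}$ (otherwise a contradiction with the upper bound via an integral lower bound over $\tri_{33}$ or $\tri_3$), hence the top point sees exactly $\tri_3$ and equals $a(3)$, and then the two remaining points are forced to be the centroids of $\tri_1$ and $\tri_2$, i.e. $a(1)$ and $a(2)$. Uniqueness follows because every inequality used is strict except at the claimed configuration.

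Finally, for the "moreover" clause: once $\ga = \set{a(1),a(2),a(3)}$ is established, one checks directly that $M(a(i)\mid\ga) \cap \tri_j = \emptyset$ ($P$-a.s.) for $j \ne i$. This is geometric: the perpendicular bisector of $a(1) = (\tfrac18,\tfrac{\sqrt3}{16})$ and $a(3) = (\tfrac12,\tfrac{3\sqrt3}{8})$ separates $\tri_1$ from $\tri_3$ because $\tri_1 \ci [0,\tfrac14]\times[0,\tfrac{\sqrt3}{8}]$ lies entirely on $a(1)$'s side, and similarly for the other pairs; the bisector of $a(1)$ and $a(2)$ is the line $x_1 = \tfrac12$, which separates $\tri_1$ (in $x_1 \le \tfrac14$) from $\tri_2$ (in $x_1 \ge \tfrac34$). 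The main obstacle is the lower-bound trapping argument — getting the interval estimates for $b$ and $c$ tight enough that the intermediate integral bounds (over unions like $\tri_{33}\uu\tri_{313}\uu\tri_{323}$) genuinely exceed $\tfrac{189}{7040}$, which requires careful bookkeeping of which basic triangles fall on which side of a moving Voronoi boundary, just as in the two-means case but with one more point and a two-parameter family to control.
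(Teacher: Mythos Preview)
Your overall strategy matches the paper's: exhibit the candidate $\{a(1),a(2),a(3)\}$ for the upper bound, then use the symmetry about $x_1=\tfrac12$ together with integral lower bounds over sub-triangles to trap the optimal configuration. Two points need correction.

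First, the arithmetic: $\sum_i p_i s_i^2 = \tfrac{1}{80}+\tfrac{1}{80}+\tfrac{3}{20} = \tfrac{7}{40}$, not $\tfrac{7}{176}$; with $V=\tfrac{27}{176}$ this gives $\tfrac{7}{40}\cdot\tfrac{27}{176}=\tfrac{189}{7040}$ directly, so there is no mismatch.

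More importantly, you have the direction of the delicate step reversed. Showing that the top point's Voronoi region does not reach below $x_2=\tfrac{\sqrt3}{8}$ into $\tri_1\cup\tri_2$ is the easy half; the hard half---and the one the paper spends its effort on---is ruling out that the \emph{lower} points' Voronoi regions nibble into $\tri_{31}$ and $\tri_{32}$. The corners $S_{31}(0,0)$ and $S_{32}(1,0)$ sit quite close to $\tri_1$ and $\tri_2$, and a priori the side point $(u,b)$ could absorb a sliver of $\tri_3$ there. From ``the boundary does not dip below $\tfrac{\sqrt3}{8}$'' you cannot conclude ``the top point sees \emph{exactly} $\tri_3$''; you have only shown it sees at most $\tri_3$. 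The paper closes this gap by supposing the left point sees some $\tri_{31^k}$, taking $k=4$ for concreteness, and explicitly computing that the distortion with $\tri_1\cup\tri_{31111}$ assigned to $(a,b)$ (and the symmetric piece to $(c,d)$) already exceeds $\tfrac{189}{7040}$. Your final paragraph gestures at this bookkeeping but does not identify this direction as the actual bottleneck.

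A smaller structural difference: the paper does not begin by asserting that an optimal three-set is reflection-symmetric. It first uses the centroid identity $\sum_i (a_i,b_i)P(M_i)=(\tfrac12,\tfrac{\sqrt3}{4})$ to force at least one point into $\tri_3$, then shows by distortion estimates that exactly one point lies in $\tri_3$ and two lie in $\tri\setminus\tri_3$, and only afterwards invokes symmetry to place the single point on $x_1=\tfrac12$ and pair the other two. Your direct appeal to symmetry of the optimizer would itself need justification, since optimal quantizers for a symmetric measure are not automatically symmetric.
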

\begin{proof}
Let us consider the three-point set $\gb$ given by $\gb:=\set{a(1), a(2), a(3)}$. Then, the distortion error is given by
\begin{align*}
\int\min_{a\in \ga} \|x-a\|^2 dP=\mathop{\sum}_{i=1}^3 \mathop{\int}\limits_{\tri_i}\|x-a(i)\|^2 dP=\frac{189}{7040}\approx0.0268466.
\end{align*}
Since $V_3$ is the quantization error for three-means, we have $V_3\leq 0.0268466$. Let $\ga$ be an optimal set of three-means. As the optimal elements are the centroids of their own Voronoi regions, we have $\ga \sci \tri$. Write $\ga:=\set{(a_i, b_i) : 1\leq i\leq 3}$. Since $(\frac 12, \frac{\sqrt{3}} 4)$ is the centroid of the stretched Sierpi\'{n}ski triangle, we have
\begin{equation} \label{eq44} \sum_{i=1}^3 (a_i, b_i) P(M((a_i, b_i)|\ga))=(\frac 12, \frac{\sqrt{3}} 4).\end{equation}
Suppose $\ga$ does not contain any element from $\tri_3$. Then, $b_i<\frac{\sqrt{3}} 4$ for all $1\leq i\leq 3$ implying
\[\sum_{i=1}^3 b_i P(M((a_i, b_i)|\ga))<\frac{\sqrt{3}} 4 \sum_{i=1}^3 P(M((a_i, b_i)|\ga))=\frac{\sqrt{3}} 4,\] which contradicts \eqref{eq44}.
 So, we can assume that $\ga$ contains an element from $\tri_3$. Similarly, we can prove that $\ga$ contains an element from $\tri\setminus \tri_3$. We now prove the following claim. 
 
 \tit{Claim. Let $\ga$ be an optimal set of three-means as described before. Then, $\ga$ contains only one element from $\tri_3$.}
 
To prove the claim, we proceed as follows:  We have already seen that $\ga$ contains an element from $\tri_3$ and an element from $\tri\setminus \tri_3$.  
 For the sake of contradiction, assume that $\ga$ contains two elements, say $(a_1, b_1)$ and $(a_2, b_2)$ from $\tri_3$, and an element $(a_3, b_3)$ from $\tri\setminus \tri_3$. 
 
 Then, the following two cases can happen. 
 
\tit{Case~1. Both the elements $(a_1, b_1)$ and $(a_2, b_2)$ lie on the vertical line $x_1=\frac 12$.}

Then, the union of the Voronoi regions of the elements $(a_1, b_1)$ and $(a_2, b_2)$ does not contain any element from $\tri_1\uu \tri_2$, which yields the fact that
 \begin{equation} \label{M101}  \begin{aligned}
&\int\min_{c\in \ga}\|x-c\|^2 dP\geq\mathop{\int}\limits_{\tri_{1}\uu\tri_2}\min_{c\in \ga}\|x-c\|^2 dP\geq \mathop{\int}\limits_{\tri_{1}\uu\tri_2}\|x-a(1,2)\|^2 dP \\
&=\frac{423}{7040}\approx0.0600852>V_3,
\end{aligned}
\end{equation} 
which leads to a contradiction. 

\tit{Case~2.  $(a_1, b_1)$ and $(a_2, b_2)$ lie on both sides of the vertical line $x_1=\frac 12$.}

In this case, due to the maximum symmetry with respect to the line $x_1=\frac 12$, we can assume that $(a_1, b_1)$ and $(a_2, b_2)$ are symmetrically located on both sides of the line $x_1=\frac 12$. Then, obviously due to symmetry, $(a_3, b_3)$ will lie on the line $x_1=\frac 12$. 

In this case, the following two subcases can happen.

\tit{Subcase~(i). The union of the Voronoi regions of  $(a_1, b_1)$ and $(a_2, b_2)$ does not contain any element from $\tri_1\uu \tri_2$.} 

In this subcase, as shown in \eqref{M101}, a contradiction arises. 

\tit{Subcase~(ii). The union of the Voronoi regions of  $(a_1, b_1)$ and $(a_2, b_2)$  contains elements from $\tri_1\uu \tri_2$.} 

Notice that by Proposition~\ref{prop1}, we know that $\ga_2:=\set{a(1, 2), a(3)}$ is an optimal set of two-means. Then, it is not difficult to show that $S_3(\ga_2)$ is an optimal set of two-means with respect to the image measure $P\circ S_3^{-1}$. Take $\gg:=\set{(\frac 14, \frac{\sqrt 3} 4), (\frac 34, \frac{\sqrt 3} 4)}$ (to know how $\gg$ is chosen, see Figure~\ref{Fig0}). If the union of the Voronoi regions of  $(a_1, b_1)$ and $(a_2, b_2)$  contains elements from $\tri_1\uu \tri_2$, then due to symmetry, the Voronoi regions of the elements of $\gg$ must contain elements from $\tri_1\uu \tri_2$; in fact, from Figure~\ref{Fig0}, by drawing the Voronoi regions of the elements of $\gg$ and $(a_3, b_3)$, we see that the union of the Voronoi regions of the elements of $\gg$ must contain $\tri_{133}\uu\tri_{233}$.  Thus, in this subcase, using the symmetry, we have  
  \begin{align*}
&\int\min_{c\in \ga}\|x-c\|^2 dP\geq\mathop{\int}\limits_{\tri_3}\min_{c\in \set{(a_1, b_1), (a_2, b_2)}}\|x-c\|^2 dP+  \mathop{\int}\limits_{\tri_{133}\uu \tri_{233}}\min_{c\in \gg}\|x-c\|^2 dP   +\mathop{\int}\limits_{\tri_{12}\uu \tri_{21}}\|x-a(12, 21)\|^2 dP \\
&\geq\mathop{\int}\limits_{\tri_3}\min_{c\in S_3(\ga_2)}\|x-c\|^2 dP+2 \mathop{\int}\limits_{\tri_{133}}\|x-(\frac 14, \frac {\sqrt 3}{4})\|^2 dP  +\mathop{\int}\limits_{\tri_{12}\uu \tri_{21}}\|x-a(12, 21)\|^2 dP \\
&\geq \frac 3{20} V_2+2 \mathop{\int}\limits_{\tri_{133}}\|x-(\frac 14, \frac {\sqrt 3}{4})\|^2 dP  +\mathop{\int}\limits_{\tri_{12}\uu \tri_{21}}\|x-a(12, 21)\|^2 dP \\
&=\frac{83691}{2816000}\approx 0.0297198>V_3,
\end{align*}
which leads to a contradiction.  

Taking into account, Case~1 and Case~2, we can conclude that $\ga$ contains only one element from $\tri_3$. Thus, the claim is true. 

By the claim, we conclude that $\ga$ contains only one element from $\tri_3$ and two elements from $\tri\setminus \tri_3$. Due to the maximum symmetry of the stretched Sierpi\'{n}ski triangle with respect to the line $x_1=\frac 12$, we can assume that the element of $\ga\ii \tri_3$, say $(a_1, b_1)$, lies on the line $x_1=\frac 12$, and the two elements of $\ga\ii (\tri\setminus \tri_3)$, say $(a_2, b_2)$ and $(a_3, b_3)$, are symmetrically distributed over the triangle $\tri$ with respect to the line $x_1=\frac 12$. Let $(a_2, b_2)$ and $(a_3, b_3)$ lie to the left and to the right of the line $x_1=\frac 12$, respectively. Notice that $\tri_1\sci M((a_2, b_2)|\ga)$, $\tri_{2}\sci M((a_3, b_3)|\ga)$, and the Voronoi regions of $(a_2, b_2)$ and $(a_3, b_3)$ do not contain any element from $\tri_{3}$.
 Hence, the optimal set of three-means is $\set{a(1), a(2), a(3)}$ and the quantization error is $V_3=\frac{189}{7040}\approx0.0268466$. By finding the perpendicular bisectors of the line segments joining the elements in $\ga$, we see that the perpendicular bisector of the line segments joining the elements $\ga\ii \tri_i$ and $\ga\ii \tri_j$
does not intersect any of $\tri_i$ or $\tri_j$ for $1\leq i\neq  j\leq 3$. Thus, the Voronoi region of the element $\ga\ii\tri_i$ does not contain any element from $\tri_j$ for all $1\leq j\neq i\leq 3$. Hence, the proof of the proposition is complete.
\end{proof}

\begin{prop} \label{prop3}
Let $\ga_n$ be an optimal set of $n$-means for all $n\geq 3$. Then, the following properties are true: \\
 $(i)$ $\ga_n\ii \tri_i\neq \es$ for all $1\leq i\leq 3$,\\
  $(ii)$ $\ga_n$ does not contain any element from $\tri\setminus (\tri_1\uu \tri_2\uu\tri_3)$, and \\
  $(iii)$ the Voronoi region of any element in $\ga_n\ii \tri_i$ does not contain any element from $\tri_j$ for all $1\leq j\neq i\leq 3$.
\end{prop}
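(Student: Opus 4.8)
The plan is to prove the three assertions together by induction on $n$, the base case $n=3$ being Proposition~\ref{prop2}; throughout one uses the monotonicity $V_n\le V_{n-1}\le\cdots\le V_3$ and, when an explicit competitor is needed, the bound $V_4\le\frac{459}{28160}$ (the cost of placing $a(1),a(2)$ in $\tri_1,\tri_2$ and the scaled optimal two-point set $S_3(\set{a(1,2),a(3)})$ in $\tri_3$). The opening move is to reduce everything to the single claim that
\[(\ast)\qquad\text{no Voronoi region }M(a\mid\ga_n),\ a\in\ga_n,\ \text{carries positive }P\text{-mass from two distinct }\tri_i,\tri_j\ (i\neq j).\]
Granting $(\ast)$, all three parts follow almost formally. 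Since $P$ is supported on $R\ci\tri_1\uu\tri_2\uu\tri_3$, the $P$-mass of each $M(a\mid\ga_n)$ lies, $P$-almost surely, in $\tri_1\uu\tri_2\uu\tri_3$; by $(\ast)$ it sits inside one $\tri_i$, and as $a$ is the $P$-centroid of $M(a\mid\ga_n)$ by Proposition~\ref{prop10}, we get $a\in\operatorname{conv}(\tri_i)=\tri_i$. Because the $\tri_i$ are pairwise disjoint this puts each $a\in\ga_n$ in exactly one $\tri_i$, which is $(ii)$ (the set $\tri\setminus(\tri_1\uu\tri_2\uu\tri_3)$ being disjoint from all $\tri_i$ and $P$-null), and it forces the $P$-mass of $M(a\mid\ga_n)$ to lie in that same $\tri_i$, which gives $(iii)$. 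Finally, if some $\tri_i$ met no point of $\ga_n$ then, $\set{M(a\mid\ga_n):a\in\ga_n}$ being a $P$-a.s.\ partition and $P(\tri_i)>0$, some $M(a\mid\ga_n)$ with $a\nin\tri_i$ would carry positive mass from $\tri_i$, contradicting the previous sentence; this is $(i)$.

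One half of $(i)$ can be settled outright and without $(\ast)$, and it is the engine for the rest. If $\ga_n\ii\tri_3=\es$ then, since $\operatorname{conv}(R)=\tri$ and $\set{x\in\tri:x_2\ge\tfrac{\sqrt3}{4}}=\tri_3$, every $a\in\ga_n$ has second coordinate $<\tfrac{\sqrt3}{4}$ while every $x\in\tri_3$ has $x_2\ge\tfrac{\sqrt3}{4}$; hence $\min_{a\in\ga_n}\|x-a\|^2\ge\bigl(x_2-\tfrac{\sqrt3}{4}\bigr)^2$ on $\tri_3$, and Lemma~\ref{lemma1} (with $k=1$) together with Lemma~\ref{lemma333} give
\[V_n\ \ge\ \int_{\tri_3}\Bigl(x_2-\tfrac{\sqrt3}{4}\Bigr)^2\,dP\ =\ p_3\cdot\tfrac14\,E(X_2^2)\ =\ \tfrac35\cdot\tfrac14\cdot\tfrac3{11}\ =\ \tfrac{9}{220}\ >\ V_3\ \ge\ V_n,\]
a contradiction, so $\ga_n\ii\tri_3\ne\es$.

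It remains to prove $(\ast)$. Suppose $M(a_0\mid\ga_n)$ carries mass $m_i>0$ from $\tri_i$ and $m_j>0$ from $\tri_j$; up to the reflection $x_1\mapsto1-x_1$ only the cases $\set{i,j}=\set{1,2}$ and $\set{1,3}$ occur, the three-triangle case being subsumed. Writing $c_i,c_j$ for the $P$-centroids of $M(a_0\mid\ga_n)\ii\tri_i$ and of $M(a_0\mid\ga_n)\ii\tri_j$ (so $c_i\in\tri_i$, $c_j\in\tri_j$), the parallel-axis identity underlying \eqref{eq1} gives $\int_{M(a_0)\ii\tri_i}\|x-a_0\|^2\,dP\ge m_i\|c_i-a_0\|^2$, and likewise on $\tri_j$, whence the two-cluster estimate
\[\int_{M(a_0\mid\ga_n)}\|x-a_0\|^2\,dP\ \ge\ \frac{m_im_j}{m_i+m_j}\,\|c_i-c_j\|^2\ \ge\ \frac{m_im_j}{m_i+m_j}\,d_{ij}^2,\qquad d_{12}=\tfrac12,\ \ d_{13}=\tfrac14 .\]
As this degenerates when $m_i$ or $m_j$ is small, one must pair it with a lower bound on the distortion charged to the other $n-1$ points of $\ga_n$, which have to account ($P$-a.s.) for $(\tri_i\uu\tri_j)\setminus M(a_0\mid\ga_n)$ and for the third first-level triangle; here the inductive hypothesis fixes the gross location of those points, after which \eqref{eq1} and the centroid formula $a(\go)=S_\go(\tfrac12,\tfrac{\sqrt3}{4})$ of Note~\ref{note1} render the relevant distortions exact and one verifies the total exceeds $V_n$. (For instance, once $(ii)$ is known, $\ga_n\ii\tri_1=\es$ forces $\ga_n\ci\tri_2\uu\tri_3$; since the point of $\tri_2\uu\tri_3$ nearest any $x\in\tri_1$ is $S_{31}(0,0)=(\tfrac14,\tfrac{\sqrt3}{4})$, one gets $V_n\ge\int_{\tri_1}\|x-(\tfrac14,\tfrac{\sqrt3}{4})\|^2\,dP=\tfrac15\bigl(\tfrac1{16}V+\tfrac{31}{256}\bigr)=\tfrac{23}{880}>V_4$, which settles $n\ge4$, while $n=3$ is the base case.) Once $(\ast)$ is secured one has $(i)$--$(iii)$ $P$-a.s., and to put $(iii)$ in its stated geometric form one argues, as in the last paragraph of the proof of Proposition~\ref{prop2}, that the perpendicular bisector of the segment joining two points of $\ga_n$ lying in distinct $\tri_i$'s meets neither triangle.

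The step I expect to be the real obstacle is the completion of $(\ast)$ when the straddled — or, in the displayed application, the neglected — triangle is one of the small ones $\tri_1$, $\tri_2$: the two-cluster inequality and the bare ``serving-from-afar'' estimate each fall just short of beating $V_3$, so the argument must lean on the sharper threshold $V_n\le V_4$ for $n\ge4$ and on the inductive structure to locate the remaining points, and it ultimately reduces to a finite but numerically tight case analysis performed with the centroid formulas of Note~\ref{note1}, the identity \eqref{eq1}, and the symmetry of the R-triangle about $x_1=\tfrac12$.
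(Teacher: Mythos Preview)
Your reduction of the whole proposition to the single claim $(\ast)$ is clean, and the formal deduction of $(i)$--$(iii)$ from $(\ast)$ is correct; the direct estimate for $\ga_n\ii\tri_3\ne\es$ and the computation $\int_{\tri_1}\|x-(\tfrac14,\tfrac{\sqrt3}{4})\|^2\,dP=\tfrac{23}{880}$ are both right. The genuine gap is precisely where you flag it: you never prove $(\ast)$. The two-cluster bound $\frac{m_im_j}{m_i+m_j}\,d_{ij}^2$ is useless on its own (it vanishes as $\min(m_i,m_j)\to0$), and your plan to compensate by lower-bounding the distortion carried by the remaining $n-1$ points ``via the inductive hypothesis'' is not executed and, as written, does not type-check: the inductive hypothesis concerns optimal sets $\ga_m$ with $m<n$, and nothing in your setup identifies $\ga_n\setminus\{a_0\}$---or any restriction of $\ga_n$ to a sub-triangle---with such an optimal set, so the hypothesis cannot ``fix the gross location'' of those points. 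Your last paragraph candidly admits that what remains is a ``finite but numerically tight case analysis'' you have not carried out; until it is, $(\ast)$ and hence the proposition remain unproved.

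The paper bypasses $(\ast)$ entirely. For $(i)$ it first pins down $V_n\le V_4\le\frac{459}{28160}$ with an explicit four-point competitor and then runs a short chain of direct contradictions (no point in $\tri_3$; all points in $\tri_3$; only one point below $x_2=\tfrac{\sqrt3}{4}$), each producing a lower bound exceeding $V_4$; symmetry about $x_1=\tfrac12$ then forces a point into each of $\tri_1,\tri_2$. Part $(ii)$ is handled by a relocation argument: any point of $\ga_n$ lying in the open middle region can be moved into one of the $\tri_i$ with strict decrease of distortion. For $(iii)$ the paper does not attempt your mass-splitting estimate at all; having already secured $(i)$ and $(ii)$, it argues that the Voronoi cell of any $a\in\ga_n\ii\tri_i$ is contained in the cell $M(a(i)\mid\ga_3)$, and then quotes the last sentence of Proposition~\ref{prop2}. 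That containment step is the device the paper uses in place of the case analysis you left open; if you want to complete your route, you will need either to perform that analysis in full or to import a separation argument of this kind after establishing $(i)$ and $(ii)$.
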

\begin{proof}
Let $\ga_n$ be an optimal set of $n$-means for $n\geq 3$. By Proposition~\ref{prop2}, we see that Proposition~\ref{prop3} is true for $n=3$. We now show that the proposition is true for $n\geq 4$. Consider the set of four elements $\gb:=\set{a(1), a(2), a(31, 32), a(33)}$. Since $V_n$ is the quantization error for $n$-means for $n\geq 4$, we have
\[V_n\leq V_4\leq \int\min_{b\in \gb}\|x-b\|^2 dP=\frac{459}{28160}\approx0.0162997.\]
If $\ga_n$ does not contain any element from $\tri_3$, then
\begin{align*}V_n& \geq \mathop{\int}\limits_{\tri_{33}} \min_{(a, b)\in \ga_n} \|(x_1, x_2)-(a, b)\|^2 dP\geq \|(\frac 12, \frac {3\sqrt 3}{8})-(\frac 12, \frac {\sqrt3}{4})\|^2P(\tri_{33})=\frac{27}{1600},
\end{align*}
implying $V_n\geq \frac{27}{1600}\approx0.016875>V_n$,
which leads to a contradiction. So, we can assume that $\ga_n\ii \tri_3\neq \es$. If $\ga_n\sci \tri_3$, i.e., if $\ga_n$ does not contain any element below the horizontal line $x_2=\frac {\sqrt 3}{4}$, then
\[V_n\geq 2 \int_{\tri_1}\min_{(a, b)\in\ga_n}\|(x_1, x_2)-(a, b)\|^2 dP\geq 2 \|S_1(\frac{1}{2},\frac{\sqrt{3}}{2})-S_3(0, 0)\|^2 P(\tri_1)=\frac{1}{40}\approx0.025>V_n.\]
which gives a contradiction. So, we can assume that $\ga_n$ contains elements below the horizontal line $x_2=\frac {\sqrt 3}{4}$. Suppose that $\ga_n$ contains only one element, say $(a_1, b_1)$ below the line $x_2=\frac {\sqrt 3}{4}$. Then, the following two cases can happen. 

\tit{Case~1.  The Voronoi region of any element in  $\ga_n\ii \tri_3$ does not contain any element from $\tri_1\uu \tri_2$.}

Then, 
\begin{align*}V_n& \geq \mathop{\int}\limits_{\tri_{1}\uu \tri_{2}} \|(x_1, x_2)-a(1, 2)\|^2 dP =\frac{423}{7040}\approx0.0600852>V_n,
\end{align*}
which is a contradiction. 

\tit{Case~2.  The Voronoi regions of the elements in  $\ga_n\ii \tri_3$ contain elements from $\tri_1\uu \tri_2$.}

Notice that the elements in $\tri_3$ closest to the elements in $\tri_1$ and $\tri_2$  are $S_3(0, 0)$ and $S_3(1, 0)$, respectively. 
First, suppose that the Voronoi regions of the elements in  $\ga_n\ii \tri_3$ contain both $\tri_{13}$ and $\tri_{23}$. Then, due to symmetry 
\begin{align*}V_n& \geq 2 \mathop{\int}\limits_{\tri_{13}} \|(x_1, x_2)-S_3(0, 0)\|^2 dP =\frac{771}{35200}\approx0.0219034>V_n,
\end{align*}
which leads to a contradiction. Similarly, we can show that for a positive integer $k$ if the Voronoi regions of the elements in  $\ga_n\ii \tri_3$ contain both $\tri_{13^k}$ and $\tri_{23^k}$, then a contradiction arises. Notice that $k$ cannot be large enough, otherwise, Case~2 will be reduced to Case~1, which will lead to another contradiction. Next, suppose that  for a positive integer $k$ the Voronoi regions of the elements in  $\ga_n\ii \tri_3$ contain only $\tri_{13^k}$ or $\tri_{23^k}$, then as optimal elements are the conditional expectation in their own Voronoi regions, we can see that a contradiction arises, i.e., the property of being the conditional expectation is violated. Due to too much technicality, we do not give the details of the proof in this context. Interested readers can verify it by drawing geometrical figures and drawing the Voronoi regions in GeoGebra.

Taking into account both Case~1 and Case~2, we can assume that $\ga_n$ contains at least two elements below the horizontal line $x_2=\frac {\sqrt 3}{4}$, and then due to maximum symmetry of $\tri_1$ and $\tri_2$ with respect to the line $x_1=\frac 12$ at least  one element will belong to $\tri_1$, and at least one element will belong to $\tri_2$. Thus, we see that $\ga_n\ii\tri_i \neq \es$ for all $1\leq i\leq 3$, which completes the proof of $(i)$. 

We now show that $\ga_n$ does not contain any element from $\tri\setminus (\tri_1\uu\tri_2\uu\tri_3)$. For the sake of contradiction, assume that $\ga_n$ contains at least one element, say $(a, b)$, from 
 $\tri\setminus (\tri_1\uu\tri_2\uu\tri_3)$. Recall that the Voronoi regions of the element $(a,b)$ must have positive probability. Moreover, as $\ga_n\ii \tri_i\neq \es$ for $1\leq i\leq 3$, it can be seen that if the Voronoi regions if $(a, b)$ contains elements from $\tri_i$, then the Voronoi region of $(a, b)$ does not contain elements from $\tri_j$ for $1\leq j\neq i\leq 3$, i.e.,  the distortion error can further be reduced by moving $(a, b)$ to $\tri_i$. Thus, if $\ga_n$ contains an element from $\tri\setminus (\tri_1\uu\tri_2\uu\tri_3)$, then a contradiction arises. Hence, $\ga_n$ does not contain any element from $\tri\setminus (\tri_1\uu\tri_2\uu\tri_3)$, i.e., $(ii)$ is true. 
 
 If $n=3$, then $(iii)$ is true by Proposition~\ref{prop2}. Let $n\geq 4$. Then, as $(iii)$ is true for $n=3$, due to the properties $(i)$ and $(ii)$, the property $(iii)$ is obviously true for all $n\geq 4$.
 Thus, the proof is yielded.   
\end{proof}

The following lemma is also true here.
\begin{lemma}\emph{(see \cite [Lemma~3.7]{CR2})} \label{lemma9} Let $P=\mathop{\sum}\limits_{\go \in I^k} p_\go P\circ S_\go^{-1}$ for some $k\geq 1$, and $\ga$ be an optimal set of $n$-means for $P$. Then, $\set{S_\go(a) : a \in \ga}$ is an optimal set of $n$-means for the image measure $P\circ S_\go^{-1}$. The converse is also true: If $\gb$ is an optimal set of $n$-means for the image measure $P\circ S_\go^{-1}$, then $\set{S_\go^{-1}(a) : a \in \gb}$ is an optimal set of $n$-means for $P$.
\end{lemma}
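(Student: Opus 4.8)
The plan is to deduce the lemma entirely from the fact that each $S_\go$ is a similarity of $\D R^2$ with ratio $s_\go>0$; the particular form of $P$ will play no role beyond ensuring that $P$, and hence every image measure $P\circ S_\go^{-1}$, is compactly supported and thus has finite second moment, so that optimal sets of $n$-means exist for both measures (see the discussion following the definition of $V_n$).

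First I would invoke the image-measure change of variables: for every Borel measurable $g\colon\D R^2\to\D R^+$ one has $\int g\, d(P\circ S_\go^{-1})=\int g\circ S_\go\, dP$. Fixing $n\in\D N$ and any $\gb\sci\D R^2$ with $\te{card}(\gb)\leq n$, I would write $\gb=S_\go(\ga)$ with $\ga:=S_\go^{-1}(\gb)$, which is legitimate since $S_\go$ is a bijection of $\D R^2$ and gives $\te{card}(\ga)=\te{card}(\gb)\leq n$. Applying the identity to $g(x)=\min_{b\in\gb}\|x-b\|^2$ and using the similarity relation $\|S_\go(x)-S_\go(y)\|=s_\go\|x-y\|$, I obtain
\begin{align*}
V(P\circ S_\go^{-1};\gb)=\int\min_{a\in\ga}\|S_\go(y)-S_\go(a)\|^2\,dP(y)=s_\go^2\int\min_{a\in\ga}\|y-a\|^2\,dP(y)=s_\go^2\,V(P;\ga).
\end{align*}
Hence $\ga\mapsto S_\go(\ga)$ is a cardinality-preserving bijection of the family of subsets of $\D R^2$ of size at most $n$ onto itself under which the distortion error for $P$ is multiplied by the fixed positive constant $s_\go^2$.

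Taking the infimum over all such sets then yields $V_n(P\circ S_\go^{-1})=s_\go^2 V_n(P)$ and shows that $\ga$ realizes $V_n(P)$ if and only if $S_\go(\ga)$ realizes $V_n(P\circ S_\go^{-1})$, which is exactly the asserted equivalence; the converse direction is obtained by running the same computation with the similarity $S_\go^{-1}$ (of ratio $s_\go^{-1}$) in place of $S_\go$. I do not expect a real obstacle here: the one point deserving care is the combinatorial bookkeeping that $S_\go$ genuinely induces a bijection on \emph{finite} sets and preserves cardinality exactly, so that no spurious reduction below $n$ points can occur on either side of the correspondence. This is precisely the content of \cite[Lemma~3.7]{CR}, whose proof applies verbatim in the present setting.
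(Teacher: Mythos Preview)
The paper does not give its own proof of this lemma; it simply cites \cite[Lemma~3.7]{CR} and asserts that the result carries over to the present nonhomogeneous setting. Your argument is the standard one and is correct: the change-of-variables identity together with the similarity relation $\|S_\go(x)-S_\go(y)\|=s_\go\|x-y\|$ yields $V(P\circ S_\go^{-1};S_\go(\ga))=s_\go^2\,V(P;\ga)$, and since $\ga\mapsto S_\go(\ga)$ is a cardinality-preserving bijection on finite subsets, optimality transfers in both directions. You are also right that the self-similarity hypothesis on $P$ is not actually used---and indeed the hypothesis as written in the paper, with uniform weights $1/3^k$, is a vestigial copy from \cite{CR} that does not match the nonhomogeneous weights $p_\go$ of the present paper; your observation that only finite second moment is needed is the correct reading.
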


\begin{prop}\label{prop4}
Let $\ga_n$ be an optimal set of $n$-means for $n\geq 3$. Then, for $c\in \ga_n$ either $c=a(\go)$ or $c=a(\go1, \go2)$ for some $\go \in I^\ast$.
\end{prop}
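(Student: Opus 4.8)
The plan is to prove the statement by strong induction on $n$, using the self-similarity of $P$ to reduce to smaller values of $n$. First I would dispose of the small cases: for $n=1$ the unique optimal set is $\set{E(X)}=\set{a(\es)}$ by Lemma~\ref{lemma333}, and for $n=2$ and $n=3$ Proposition~\ref{prop1} (with its uniqueness remark) and Proposition~\ref{prop2} give the optimal sets $\set{a(1,2),a(3)}$ and $\set{a(1),a(2),a(3)}$, every element of which has the form $a(\go)$ or $a(\go1,\go2)$ for a suitable $\go\in I^\ast$ (take $\go=\es$ where needed). So I fix $n\geq 4$ and assume the statement holds for every optimal set of $m$-means with $1\leq m<n$.

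Let $\ga_n$ be an optimal set of $n$-means. By Proposition~\ref{prop3} we have $\ga_n\ci\tri_1\uu\tri_2\uu\tri_3$, each $\ga_n\ii\tri_i\neq\es$, and the Voronoi region of every point of $\ga_n\ii\tri_i$ meets no $\tri_j$ with $j\neq i$. Set $n_i:=\#(\ga_n\ii\tri_i)$, so that $n_1+n_2+n_3=n$ and $1\leq n_i\leq n-2<n$. Since $\tri_1,\tri_2,\tri_3$ are pairwise disjoint and $P=\frac15 P\circ S_1^{-1}+\frac15 P\circ S_2^{-1}+\frac35 P\circ S_3^{-1}$, the conditional distribution of $X$ given $X\in\tri_i$ is precisely $P\circ S_i^{-1}$; combining this with part $(iii)$ of Proposition~\ref{prop3} and Lemma~\ref{lemma1} (the change of variable $x=S_i(y)$) gives
\[\int\min_{a\in\ga_n}\|x-a\|^2\,dP=\sum_{i=1}^3 p_i s_i^2\, V\bigl(P;\,S_i^{-1}(\ga_n\ii\tri_i)\bigr).\]

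The key step — and, I expect, the only genuine obstacle — is to show that $S_i^{-1}(\ga_n\ii\tri_i)$ is an optimal set of $n_i$-means for $P$ for each $i$. I would argue by contradiction. Suppose, say, $V\bigl(P;S_1^{-1}(\ga_n\ii\tri_1)\bigr)>V_{n_1}$, and let $\gb_1$ be an optimal set of $n_1$-means for $P$. Then $\ga':=S_1(\gb_1)\uu(\ga_n\ii\tri_2)\uu(\ga_n\ii\tri_3)$ has at most $n$ points, with $\ga'\ii\tri_1=S_1(\gb_1)$ and $\ga'\ii\tri_i=\ga_n\ii\tri_i$ for $i=2,3$. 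Because the $\tri_i$ are disjoint, $\int\min_{a\in\ga'}\|x-a\|^2\,dP\leq\sum_{i=1}^3\int_{\tri_i}\min_{a\in\ga'\ii\tri_i}\|x-a\|^2\,dP$, and evaluating the right-hand side exactly as in the display above yields $V(P;\ga')\leq p_1 s_1^2 V_{n_1}+p_2 s_2^2 V\bigl(P;S_2^{-1}(\ga_n\ii\tri_2)\bigr)+p_3 s_3^2 V\bigl(P;S_3^{-1}(\ga_n\ii\tri_3)\bigr)<V_n$, contradicting the optimality of $\ga_n$.

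Granted this, each $S_i^{-1}(\ga_n\ii\tri_i)$ is an optimal set of $n_i$-means with $n_i<n$, so by the induction hypothesis each of its points equals $a(\go)$ or $a(\go1,\go2)$ for some $\go\in I^\ast$. Applying $S_i$ and using the identities $S_i(a(\go))=a(i\go)$ and $S_i(a(\go1,\go2))=a(i\go1,i\go2)$ — which follow from $a(\go)=S_\go(\frac12,\frac{\sqrt3}{4})$ (Note~\ref{note1}) and the affinity of $S_i$, together with the fact that $S_i$ preserves the relative weights $P(\tri_{\go1})$ and $P(\tri_{\go2})$ — shows that every point of $\ga_n\ii\tri_i$ equals $a(\gt)$ or $a(\gt1,\gt2)$ with $\gt=i\go\in I^\ast$. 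Letting $i$ run over $\set{1,2,3}$ exhausts $\ga_n$, completing the induction. Everything except the self-similar reduction in the previous paragraph is routine bookkeeping with the coding of the basic triangles.
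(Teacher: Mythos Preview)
Your proof is correct and follows essentially the same self-similarity reduction as the paper: both use Proposition~\ref{prop3} to split $\ga_n$ among the $\tri_i$ and then argue that $S_i^{-1}(\ga_n\ii\tri_i)$ is itself an optimal set for $P$, reducing to smaller cardinality. The paper packages the reduction as a recursion on the depth of the word $\go$ (invoking Lemma~\ref{lemma9} at each step), whereas you run a clean strong induction on $n$ and spell out the replacement argument that justifies optimality of $S_i^{-1}(\ga_n\ii\tri_i)$ --- a step the paper leaves to Lemma~\ref{lemma9} --- but the underlying idea is the same.
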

\begin{proof}
Let $\ga_n$ be an optimal set of $n$-means for $n\geq 3$ and $c\in \ga_n$. Then, by Proposition~\ref{prop3}, we see that either $c\in \ga_n\ii \tri_i$ for some $1\leq i\leq 3$. Without any loss of generality, we can assume that  $c\in \ga_n\ii \tri_1$. If $\te{card}(\ga_n\ii \tri_1)=1$, then by Lemma~\ref{lemma9}, $S_1^{-1}(\ga_n\ii \tri_1)$ is an optimal set of one-mean yielding $c=S_1(\frac{1}{2}, \frac{\sqrt{3}}{4})=a(1)$. If $\te{card}(\ga_n\ii \tri_1)= 2$, then by Lemma~\ref{lemma9}, $S_1^{-1}(\ga_n\ii \tri_1)$ is an optimal set of two-means, i.e., $S_1^{-1}(\ga_n\ii \tri_1)=\set{a(1, 2), a(3)}$ yielding $c=a(11, 12)$ or $c=a(13)$. Similarly, if $\te{card}(\ga_n\ii \tri_1)= 3$, then $c=a(11), a(12)$, or $c=a(13)$. Let $\te{card}(\ga_n\ii \tri_1)\geq 4$. Then, as similarity mappings preserve the ratio of the distances of a element from any other two elements, using Proposition~\ref{prop3} again, we have $(\ga_n\ii \tri_1)\ii \tri_{1i}=\ga_n\ii \tri_{1i}\neq \es$ for $1\leq i\leq 3$, and $\ga_n\ii \tri_1=\uu_{i=1}^3(\ga_n\ii \tri_{1i})$. Without any loss of generality, assume that $c\in \ga_n\ii \tri_{11}$. If $\te{card}(\ga_n\ii \tri_{11})=1$, then $c=a(11)$. If $\te{card}(\ga_n\ii \tri_{11})=2$, then $c=a(111, 112)$ or $c=a(113)$. If $\te{card}(\ga_n\ii \tri_{11})=3$, then $c=a(111), a(112)$, or $c=a(113)$. If $\te{card}(\ga_n\ii \tri_{11})\geq 4$, then proceeding inductively in a similar way, we can find a word $\go\in I^\ast$ with $11\prec \go$, such that $c\in \ga_n\ii \tri_\go$. If $\te{card}(\ga_n\ii \tri_\go)=1$, then $c=a(\go)$. If $\te{card}(\ga_n\ii \tri_\go)=2$, then $c=a(\go1, \go2)$ or $a(\go3)$. If $\te{card}(\ga_n\ii \tri_\go)=3$, then $c=a(\go1), a(\go2)$, or $a(\go3)$. Thus, the proof of the proposition is yielded.
\end{proof}

\begin{note}\label{note2}
Let $\ga$ be an optimal set of $n$-means for some $n\geq 2$. Then, by Proposition~\ref{prop4}, for $a\in \ga$ we have $P$-almost surely, $M(a|\ga)=\tri_\go$ if $a=a(\go)$, and $M(a|\ga)=\tri_{\go1}\uu\tri_{\go2}$ if $a=a(\go1, \go2)$. For $\go \in I^\ast$, write
\begin{equation}\label{eq2}
E(\go):=\mathop{\int}\limits_{\tri_\go}\|x-a(\go)\|^2 dP \te{ and } E(\go1, \go2):=\mathop{\int}\limits_{\tri_{\go1}\uu\tri_{\go2}}\|x-a(\go1, \go2)\|^2 dP.
\end{equation}
\end{note}
Let us now give the following lemma.

\begin{lemma} \label{lemma10}
For any $\go \in I^\ast$, let $E(\go)$ and $E(\go1, \go2)$ be defined by \eqref{eq2}. Then, $E(\go1, \go2)=\frac{47}{18} E(\go3)=\frac{47}{120}E(\go)$, and $E(\go1)=E(\go2)=\frac 1{12} E(\go3)=\frac{1}{80}E(\go)$.
\end{lemma}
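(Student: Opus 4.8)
The plan is to reduce every integral that occurs to the master identity \eqref{eq1}. First, taking $(a,b)=a(\go)$ in \eqref{eq1} gives $E(\go)=p_\go s_\go^2 V$, and the identical computation with $\go$ replaced by $\go i$ gives $E(\go i)=p_{\go i}s_{\go i}^2 V$ for $i\in\set{1,2,3}$ (here one uses $a(\go i)=S_{\go i}(\frac12,\frac{\sqrt3}4)=S_\go(a(i))$ from Note~\ref{note1}). Since $s_1=s_2=\frac14$, $s_3=\frac12$, $p_1=p_2=\frac15$ and $p_3=\frac35$, we have $p_{\go1}=p_{\go2}=\frac15 p_\go$, $p_{\go3}=\frac35 p_\go$, $s_{\go1}=s_{\go2}=\frac14 s_\go$ and $s_{\go3}=\frac12 s_\go$; hence $E(\go1)=E(\go2)=\frac1{80}E(\go)$ and $E(\go3)=\frac3{20}E(\go)$, which already yields $E(\go1)=E(\go2)=\frac1{12}E(\go3)=\frac1{80}E(\go)$.

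Next I would pin down the centroid $a(\go1,\go2)$. Since $P(\tri_{\go1})=P(\tri_{\go2})=\frac15 p_\go$ are equal, \eqref{eq0} forces $a(\go1,\go2)=\frac12\big(a(\go1)+a(\go2)\big)$, the midpoint of the two child centroids. Because $a(\go i)=S_\go(a(i))$ and $S_\go$ is a similarity of ratio $s_\go$, using $a(1)=(\frac18,\frac{\sqrt3}{16})$ and $a(2)=(\frac78,\frac{\sqrt3}{16})$ from Proposition~\ref{prop2} one gets $\|a(\go1)-a(\go2)\|=s_\go\|a(1)-a(2)\|=\frac34 s_\go$, and therefore $\|a(\go1)-a(\go1,\go2)\|^2=\|a(\go2)-a(\go1,\go2)\|^2=\frac9{64}s_\go^2$. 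Then, splitting $E(\go1,\go2)=\int_{\tri_{\go1}}\|x-a(\go1,\go2)\|^2\,dP+\int_{\tri_{\go2}}\|x-a(\go1,\go2)\|^2\,dP$ and applying \eqref{eq1} to each piece with $(a,b)=a(\go1,\go2)$, I obtain
\[
E(\go1,\go2)=2\cdot\frac15 p_\go\Big(\frac1{16}s_\go^2 V+\frac9{64}s_\go^2\Big)=\frac25 p_\go s_\go^2\Big(\frac{V}{16}+\frac9{64}\Big).
\]
Plugging in $V=\frac{27}{176}$ from Lemma~\ref{lemma333} gives $\frac{V}{16}+\frac9{64}=\frac{27}{2816}+\frac{396}{2816}=\frac{423}{2816}$, hence $E(\go1,\go2)=\frac{423}{7040}p_\go s_\go^2$; dividing by $E(\go)=\frac{27}{176}p_\go s_\go^2$ yields $E(\go1,\go2)=\frac{47}{120}E(\go)$, and together with $E(\go3)=\frac3{20}E(\go)$ this is $E(\go1,\go2)=\frac{47}{18}E(\go3)$.

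I do not anticipate a genuine obstacle: the whole content is the observation that \eqref{eq1} turns each integral into $p_\go s_\go^2$ times an explicit constant depending only on the four ratios $s_i,p_i$ and on $V$, after which it is bookkeeping. The one spot deserving a moment's attention is the identification of $a(\go1,\go2)$ with the midpoint of $a(\go1)$ and $a(\go2)$, which is dictated by the equal $P$-masses of $\tri_{\go1}$ and $\tri_{\go2}$; and for the degenerate case where $\go$ is the empty word one simply uses the identity $\int_\tri\|x-(a,b)\|^2\,dP=V+\|(\frac12,\frac{\sqrt3}4)-(a,b)\|^2$ from Note~\ref{note1} in place of \eqref{eq1}, with the same conclusion.
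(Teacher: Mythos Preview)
Your proposal is correct and follows essentially the same route as the paper: both split $E(\go1,\go2)$ over $\tri_{\go1}$ and $\tri_{\go2}$, apply \eqref{eq1}, identify $a(\go1,\go2)$ as the midpoint of $a(\go1)$ and $a(\go2)$, and use the similarity $S_\go$ to reduce $\|a(\go1)-a(\go1,\go2)\|^2$ to $\frac{9}{64}s_\go^2$. The only cosmetic differences are that the paper keeps $V$ symbolic (writing the answer as $p_\go s_\go^2 V\big(p_1s_1^2+p_2s_2^2+\frac{9}{160}\frac{1}{V}\big)$) while you substitute $V=\frac{27}{176}$ numerically, and the paper computes $\|S_1(\frac12,\frac{\sqrt3}4)-S_2(\frac12,\frac{\sqrt3}4)\|$ directly rather than citing the values of $a(1),a(2)$.
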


\begin{proof} By \eqref{eq1}, we have
\begin{align*}
&E(\go1, \go2)=\mathop{\int}\limits_{\tri_{\go1}\uu\tri_{\go2}}\|x-a(\go1, \go2)\|^2 dP=\mathop{\int}\limits_{\tri_{\go1}}\|x-a(\go1, \go2)\|^2 dP+\mathop{\int}\limits_{\tri_{\go2}}\|x-a(\go1, \go2)\|^2 dP\\
&=p_{\go1} (s_{\go1}^2V+\|a(\go1)-a(\go1, \go2)\|^2)+p_{\go2} (s_{\go2}^2V+\|a(\go2)-a(\go1, \go2)\|^2).
\end{align*}
Notice that
\begin{align*}
&\|a(\go1)-a(\go1, \go2)\|^2=\|S_{\go1}(\frac 12, \frac{\sqrt{3}}{4})-\frac 12\Big(S_{\go1}(\frac 12, \frac{\sqrt{3}}{4})+S_{\go2}(\frac 12, \frac{\sqrt{3}}{4})\Big)\|^2\\
=&\frac 14\|S_{\go1}(\frac 12, \frac{\sqrt{3}}{4})-S_{\go2}(\frac 12, \frac{\sqrt{3}}{4})\|^2=\frac 1 4 s_\go^2 \|S_1(\frac 12, \frac{\sqrt{3}}{4})-S_2(\frac 12, \frac{\sqrt{3}}{4})\|^2=\frac 9{64} s_\go^2,
\end{align*}
and similarly, $\|a(\go2)-a(\go1, \go2)\|^2=\frac 9{64} s_\go^2$. Thus, we obtain,
\begin{align*}
&E(\go1, \go2)=p_{\go1} (s_{\go1}^2V+\frac 9{64} s_\go^2)+p_{\go2} (s_{\go2}^2V+\frac 9{64} s_\go^2)=p_\go s_\go^2 V(p_1s_1^2+p_2s_2^2)+\frac 9{64}p_\go s_\go^2(p_1+p_2)
\end{align*}
yielding $E(\go1, \go2)=p_\go s_\go^2 V(p_1s_1^2+p_2 s_2^2+\frac 9{160} \frac 1 V)=p_\go s_\go^2 V\frac{47}{120} =\frac{47}{120}E(\go).$
Since $p_1=p_2$, $s_1=s_2$, we have $E(\go1)=p_{\go1}s_{\go1}^2V=\frac 1 {80} p_\go s_\go^2 V=E(\go2)$. Again, $E(\go3)=p_{\go3}s_{\go3}^2V=E(\go)p_3s_3^2=\frac 3{20}E(\go)$. Hence,
\[E(\go1, \go2)=\frac{47}{18} E(\go3)=\frac{47}{120} E(\go) \te{ and } E(\go1)=E(\go2)=\frac 1{12} E(\go3)=\frac{1}{80}E(\go),\]
which is the lemma.
\end{proof}

The following lemma plays an important role in proving the main theorem of the paper.
%
%\begin{lemma} \label{lemma11}
%
%Let $\go, \gt \in I^\ast$. Then
%
%$(i)$ $E(\go)> E(\gt)$ if and only if $E(\go1, \go2)+E(\go3)+E(\gt)< E(\go)+E(\gt1, \gt2)+E(\gt3)$;
%
%$(ii)$ $E(\go)> E(\gt1, \gt2)$ if and only if $E(\go1, \go2)+E(\go3)+E(\gt1, \gt2)< E(\go)+E(\gt1)+E(\gt2)$;
%
%$(iii)$  $E(\go1, \go2)> E(\gt)$ if and only if $E(\go1)+E(\go2)+E(\gt)< E(\go1, \go2)+E(\gt1, \gt2)+E(\gt3)$;
%
%$(iv)$  $E(\go1, \go2)> E(\gt1, \gt2)$ if and only if $E(\go1)+E(\go2)+E(\gt1, \gt2)< E(\go1, \go2)+E(\gt1)+E(\gt2)$;
%
%where for any $\go \in I^\ast$,  $E(\go)$ and $E(\go1, \go2)$ are defined by \eqref{eq2}.
%\end{lemma}
%
%\begin{proof} To prove $(i)$, using Lemma~\ref{lemma10}, we see that
%\begin{align*}
%LHS&=E(\go1, \go2)+E(\go3)+E(\gt)=(\frac {47}{120}+\frac{3}{20})E(\go)+E(\gt)=\frac{13}{24} E(\go)+E(\gt),\\
%RHS&= E(\go)+E(\gt1, \gt2)+E(\gt3)=E(\go)+\frac{13}{24} E(\gt).
%\end{align*}
%Thus, $LHS< RHS$ if and only if $\frac{13}{24} E(\go)+E(\gt)< E(\go)+\frac{13}{24} E(\gt)$, which yields $E(\gt)<E(\go)$. Thus $(i)$ is proved. 
%Proceeding in the similar way, $(ii)$, $(iii)$ and $(iv)$ can be proved. Thus, the lemma is deduced.
%\end{proof}

\begin{lemma} \label{lemma11}

Let $\go, \gt \in I^\ast$. Then,

$(i)$ $E(\go)> E(\gt)$ if and only if $E(\go1, \go2)+E(\go3)+E(\gt)< E(\go)+E(\gt1, \gt2)+E(\gt3)$;

$(ii)$ $E(\go)> \frac{96}{47} E(\gt1, \gt2)$ if and only if $E(\go1, \go2)+E(\go3)+E(\gt1, \gt2)< E(\go)+E(\gt1)+E(\gt2)$;

$(iii)$  $E(\go1, \go2)> \frac {47}{96}E(\gt)$ if and only if $E(\go1)+E(\go2)+E(\gt)< E(\go1, \go2)+E(\gt1, \gt2)+E(\gt3)$;

$(iv)$  $E(\go1, \go2)> E(\gt1, \gt2)$ if and only if $E(\go1)+E(\go2)+E(\gt1, \gt2)< E(\go1, \go2)+E(\gt1)+E(\gt2)$;

where for any $\go \in I^\ast$,  $E(\go)$ and $E(\go1, \go2)$ are defined by \eqref{eq2}.
\end{lemma}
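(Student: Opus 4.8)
The plan is to prove all four biconditionals by one mechanical device: substitute the exact proportions of Lemma~\ref{lemma10} into each statement so that every quantity becomes a fixed rational multiple of a single number, after which each equivalence is an identity between two linear inequalities. Write $u=E(\go)$ and $v=E(\gt)$. Lemma~\ref{lemma10} gives
\[E(\go1,\go2)=\tfrac{47}{120}u,\qquad E(\go3)=\tfrac{3}{20}u,\qquad E(\go1)=E(\go2)=\tfrac1{80}u,\]
and likewise with $v$ in place of $u$; hence $E(\go1,\go2)+E(\go3)=\tfrac{13}{24}u$ and $E(\go1)+E(\go2)=\tfrac1{40}u$. The useful reading of Lemma~\ref{lemma10} is that adding one point to a block and re-optimizing lowers its distortion by a fixed fraction of that distortion, the fraction depending only on the type of block: replacing the single-triangle block $\tri_\go$ by $\tri_{\go1}\uu\tri_{\go2}$ together with $\tri_{\go3}$ replaces $u$ by $\tfrac{13}{24}u$, a gain of $\tfrac{11}{24}u$, while replacing the two-triangle block $\tri_{\go1}\uu\tri_{\go2}$ by $\tri_{\go1}$ together with $\tri_{\go2}$ replaces $E(\go1,\go2)=\tfrac{47}{120}u$ by $\tfrac1{40}u$, a gain of $\big(\tfrac{47}{120}-\tfrac1{40}\big)u=\tfrac{11}{30}u$.

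Each of the four right-hand inequalities, after moving the $\go$-terms to one side and the $\gt$-terms to the other, asserts exactly that one such gain strictly exceeds another, so I would take the cases in the order (i), (iv), then (ii), (iii). For (i) the right side rearranges to $u-\big(E(\go1,\go2)+E(\go3)\big)>v-\big(E(\gt1,\gt2)+E(\gt3)\big)$, i.e. $\tfrac{11}{24}u>\tfrac{11}{24}v$, i.e. $u>v$, the left side. For (iv) it rearranges to $E(\go1,\go2)-\big(E(\go1)+E(\go2)\big)>E(\gt1,\gt2)-\big(E(\gt1)+E(\gt2)\big)$, i.e. $\tfrac{11}{30}u>\tfrac{11}{30}v$, i.e. $u>v$, which is the same as $E(\go1,\go2)>E(\gt1,\gt2)$ since $E(\cdot1,\cdot2)=\tfrac{47}{120}E(\cdot)$ is increasing. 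Cases (ii) and (iii) run the same way except that the two gains compared are of the two different types, so the substitution produces a comparison between $\tfrac{11}{24}u$ and $\tfrac{11}{30}v$ for (ii) and its mirror image for (iii); clearing denominators and re-expressing the $\gt$-side through $E(\gt1,\gt2)=\tfrac{47}{120}v$ (respectively the $\go$-side through $E(\go1,\go2)=\tfrac{47}{120}u$) is what should produce the stated thresholds $E(\go)>E(\gt1,\gt2)$ and $E(\go1,\go2)>E(\gt)$.

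Once Lemma~\ref{lemma10} is in hand there is no geometry or optimality argument left: everything is linear arithmetic over $\D Q$ in the two unknowns $u,v$, so the proof is short. The step I would verify most carefully is the pair of mixed cases (ii), (iii), because there the two block types carry genuinely different data ($s_3=\tfrac12$, $p_3=\tfrac35$ against $s_1=s_2=\tfrac14$, $p_1=p_2=\tfrac15$), the two gain-fractions $\tfrac{11}{24}$ and $\tfrac{11}{30}$ are unequal, and one must keep straight whether the threshold is naturally written through $E(\gt)$ or through $E(\gt1,\gt2)$ so that the left-hand inequality of the lemma emerges with the right constant rather than with a stray multiplicative factor.
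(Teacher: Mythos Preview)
Your plan---substitute the proportions from Lemma~\ref{lemma10} and reduce each biconditional to linear arithmetic in $u=E(\go)$ and $v=E(\gt)$---is exactly the paper's approach, and your treatment of (i) and (iv) is complete and correct.

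The gap is precisely where you flagged it. For (ii) your own arithmetic yields the right-hand inequality as $\tfrac{11}{24}u>\tfrac{11}{30}v$, i.e.\ $u>\tfrac{4}{5}v=\tfrac{96}{120}v$, whereas the left-hand hypothesis $E(\go)>E(\gt1,\gt2)$ reads $u>\tfrac{47}{120}v$. Since $\tfrac{96}{120}\ne\tfrac{47}{120}$ these are \emph{not} equivalent, so your hedged claim that re-expressing through $E(\gt1,\gt2)$ ``should produce the stated threshold'' does not survive the arithmetic. Concretely, take $\go=1$ and $\gt=33$: then $E(\go)=\tfrac{1}{80}V$ and $E(\gt1,\gt2)=\tfrac{47}{120}\cdot\tfrac{9}{400}V=\tfrac{423}{48000}V$, so $E(\go)>E(\gt1,\gt2)$ holds, yet $E(\go1,\go2)+E(\go3)+E(\gt1,\gt2)=\tfrac{748}{48000}V>\tfrac{627}{48000}V=E(\go)+E(\gt1)+E(\gt2)$ and the right-hand inequality of (ii) fails. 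Part (iii) has the symmetric defect (the right side is $u>\tfrac{5}{4}v$ while the left side is $u>\tfrac{120}{47}v$). The paper's own proof waves (ii)--(iv) through as following ``in the similar way'' without doing the check, so the trouble lies in the statement of the lemma rather than in any divergence of your method; but the step you said you would ``verify most carefully'' is exactly the one that does not go through, and you stopped just short of the computation that would have exposed it.
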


\begin{proof} To prove $(i)$, using Lemma~\ref{lemma10}, we see that
\begin{align*}
LHS&=E(\go1, \go2)+E(\go3)+E(\gt)=(\frac {47}{120}+\frac{3}{20})E(\go)+E(\gt)=\frac{13}{24} E(\go)+E(\gt),\\
RHS&= E(\go)+E(\gt1, \gt2)+E(\gt3)=E(\go)+\frac{13}{24} E(\gt).
\end{align*}
Thus, $LHS< RHS$ if and only if $\frac{13}{24} E(\go)+E(\gt)< E(\go)+\frac{13}{24} E(\gt)$, which yields $E(\gt)<E(\go)$. Thus $(i)$ is proved. 
To prove $(ii)$ we proceed as follows:  
\begin{align*}
LHS&=E(\go1, \go2)+E(\go3)+E(\gt1, \gt2)=(\frac {47}{120}+\frac{6}{40})E(\go)+E(\gt 1, \gt 2)=\frac{13}{24} E(\go)+E(\gt 1, \gt 2),\\
RHS&= E(\go)+E(\gt1)+E(\gt2)=E(\go)+\frac{3}{47} E(\gt 1,\gt 2).
\end{align*}
Thus, $LHS< RHS$ if and only if $\frac{13}{24} E(\go)+E(\gt 1, \gt 2)<E(\go)+\frac{3}{47} E(\gt 1,\gt 2)$, which yields $E(\go)> \frac{96}{47} E(\gt1, \gt2)$. Thus $(ii)$ is proved. Proceeding in the similar way, $(iii)$ and $(iv)$ can be proved. Thus, the lemma is deduced.
\end{proof}

\begin{figure}\label{Figurepdf}
\vspace{-0.5 in}
\centerline{\includegraphics[width=7.5 in, height=8 in]{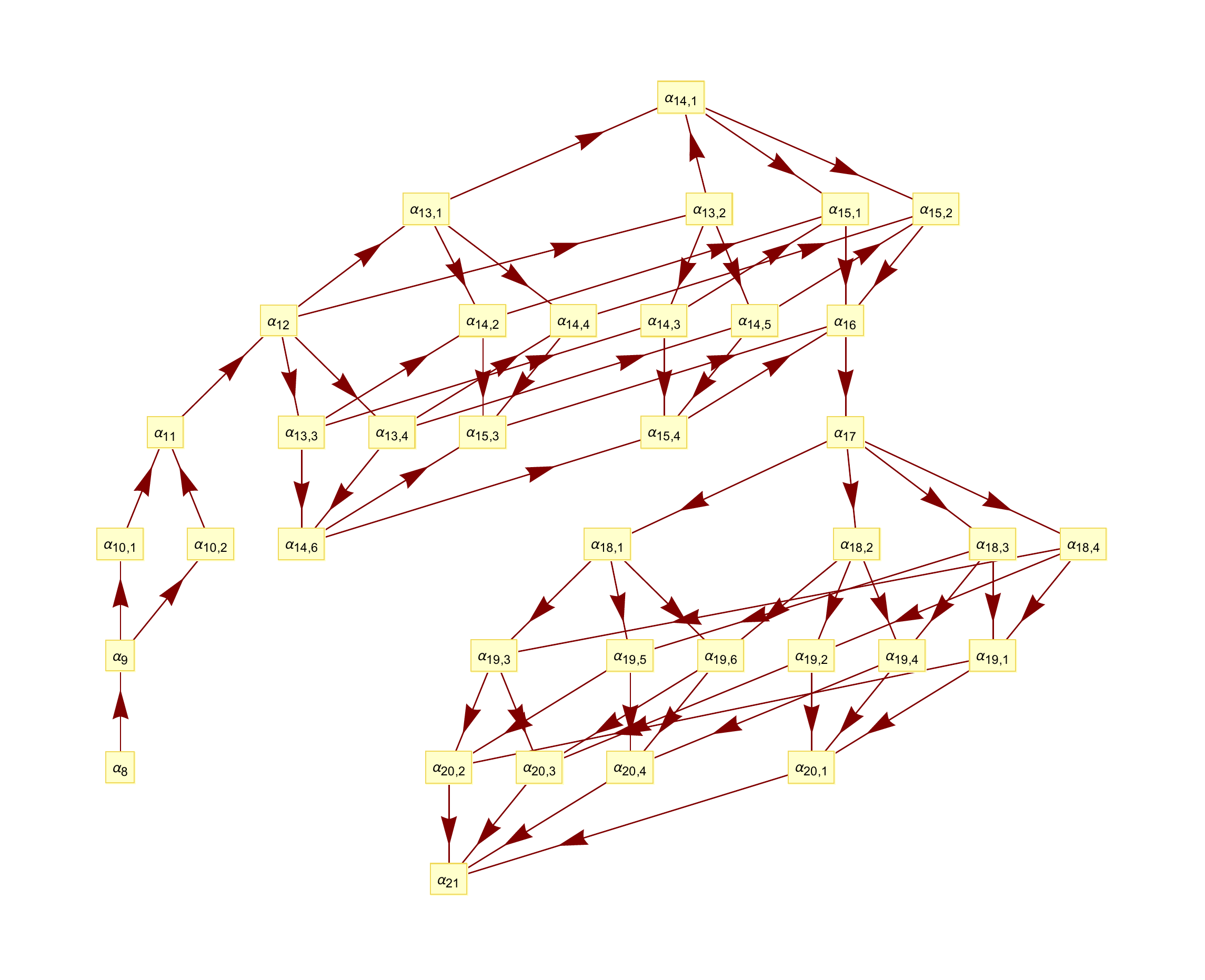}}
\vspace{-0.7 in}
\caption{Tree diagram of the optimal sets from $\ga_8$ to $\ga_{21}$.} \label{Fig4}
\end{figure}

In the following theorem, we give the induction formula to determine the optimal sets of $n$-means for any $n\geq 2$.

\begin{theorem} \label{Th1} For any $n\geq 2$, let $\ga_n:=\set{a(i) : 1\leq  i\leq n}$ be an optimal set of $n$-means, i.e., $\alpha_n \in\C C_n:= \mathcal{C}_n(P)$. For $\go \in I^\ast$, let $E(\go)$ and $E(\go1, \go2)$ be defined by \eqref{eq2}. Set
\[\tilde  E(a(i)):=\left\{\begin{array} {ll}
E(\go) \te{ if } a(i)=a(\go) \te{ for some }  \go \in I^\ast, \\
E(\go1, \go2) \te{ if } a(i)=a(\go1, \go2) \te{ for some }  \go \in I^\ast,
\end{array} \right.
\]
and $W(\ga_n):=\set{a(j)  : a(j) \in \ga_n \te{ and } \tilde E(a(j))\geq \tilde E(a(i)) \te{ for all } 1\leq i\leq n}$. Take any $a(j) \in W(\ga_n)$, and write
\[\ga_{n+1}(a(j)):=\left\{\begin{array}{ll}
(\ga_n\setminus \set{a(j)})\uu \set{a(\go1, \go2), a(\go3)} \te{ if } a(j)=a(\go), &\\
(\ga_n \setminus \set{a(j)})\uu \set{a(\go1), \, a(\go2)} \te{ if } a(j)=a(\go1, \go2).
\end{array}\right.
\]
Then $\ga_{n+1}(a(j))$ is an optimal set of $(n+1)$-means, and the number
of such sets is given by
\[\te{card}\Big(\UU_{\alpha_n \in \C{C}_n}\{\alpha_{n+1}(a(j)) : a(j) \in W(\ga_n)\}\Big).\]
\end{theorem}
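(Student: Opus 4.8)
The plan is to prove the theorem by induction on $n$, using the structural results already established (Propositions~\ref{prop2}, \ref{prop3}, \ref{prop4} and Lemmas~\ref{lemma10}, \ref{lemma11}) to show that every optimal set of $(n+1)$-means arises from an optimal set of $n$-means by ``splitting'' a component of maximal error contribution. The base case $n=2$ is Proposition~\ref{prop1} together with Proposition~\ref{prop2}: splitting $a(1,2)$ in the optimal two-set $\{a(1,2),a(3)\}$ into $\{a(1),a(2)\}$ yields exactly $\{a(1),a(2),a(3)\}$, the optimal three-set, and since $\tilde E(a(1,2)) = E(1,2) = \tfrac{47}{120}V > \tfrac{3}{20}V = E(3) = \tilde E(a(3))$ by Lemma~\ref{lemma10}, we have $W(\ga_2)=\{a(1,2)\}$, so the construction is forced and produces the unique optimal three-set.

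For the inductive step, assume the theorem holds up to level $n$ and let $\ga_n$ be an optimal set of $n$-means. By Proposition~\ref{prop4}, every element of $\ga_n$ is of the form $a(\go)$ or $a(\go1,\go2)$, and by Note~\ref{note1} the Voronoi region of such a point is $\tri_\go$ or $\tri_{\go1}\uu\tri_{\go2}$ respectively, so that the total quantization error is $V_n = \sum_{a(i)\in\ga_n}\tilde E(a(i))$. First I would argue that $\ga_{n+1}(a(j))$ for $a(j)\in W(\ga_n)$ is optimal: given any optimal set $\ga_{n+1}$ of $(n+1)$-means, by the same structural analysis (the analogues of Propositions~\ref{prop3} and \ref{prop4}, which continue to hold for $n+1\geq 3$) its elements are of the prescribed form, and each of the $3^k$-type triangles $\tri_\go$ at a given level is partitioned among elements of $\ga_{n+1}$. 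Comparing $\ga_{n+1}$ with $\ga_n$, exactly one Voronoi cell of $\ga_n$ must be subdivided (into at most $3$, but the relevant cases are $a(\go)\to\{a(\go1,\go2),a(\go3)\}$ or $a(\go)\to\{a(\go1),a(\go2),a(\go3)\}$ and $a(\go1,\go2)\to\{a(\go1),a(\go2)\}$); I would then use Lemma~\ref{lemma11} to show that the decrease in error from splitting a cell $a(i)$ equals a fixed fraction of $\tilde E(a(i))$ (namely $E(\go)-E(\go1,\go2)-E(\go3) = \tfrac{11}{24}E(\go) = \tfrac{11}{24}\tilde E(a(\go))$ in the first case and $E(\go1,\go2)-E(\go1)-E(\go2) = \tfrac{45}{47}E(\go1,\go2) = \tfrac{45}{47}\tilde E(a(\go1,\go2))$ in the second), so that to minimize the resulting error one must split a cell of maximal $\tilde E$-value — except that the two ``gain fractions'' $\tfrac{11}{24}$ and $\tfrac{45}{47}$ differ, so a short separate computation is needed to confirm that splitting an $a(\go1,\go2)$-type cell is never beaten by splitting a strictly smaller $a(\go)$-type cell, and vice versa; this is precisely the content packaged into the four equivalences of Lemma~\ref{lemma11}, so I would invoke it directly to conclude that the optimal move is exactly to split some $a(j)\in W(\ga_n)$.

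For the counting statement, once we know that $\C C_{n+1} = \UU_{\ga_n\in\C C_n}\{\ga_{n+1}(a(j)) : a(j)\in W(\ga_n)\}$ as a set, its cardinality is by definition $\te{card}\big(\UU_{\ga_n\in\C C_n}\{\ga_{n+1}(a(j)):a(j)\in W(\ga_n)\}\big)$; the only subtlety is that two distinct pairs $(\ga_n, a(j))$ and $(\ga_n', a(j'))$ could produce the same $(n+1)$-set, which is why the formula is stated with an explicit union (over all optimal $\ga_n$ and all maximizers within each) rather than as a plain sum $\sum_{\ga_n}\te{card}(W(\ga_n))$. I would remark that the reverse inclusion — every optimal $(n+1)$-set \emph{does} arise this way — follows by running the splitting argument backward: merging the two or three children of a subdivided triangle in an optimal $\ga_{n+1}$ yields a set $\ga_n$ whose error is $V_{n+1}$ plus the corresponding gain, and one checks this $\ga_n$ is itself optimal (if it were not, re-splitting an improved $\ga_n$ would beat $V_{n+1}$).

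The main obstacle I anticipate is the verification that one never needs to subdivide a single Voronoi cell into three pieces at once, nor to simultaneously modify two cells, in passing from $n$ to $n+1$ optimality — i.e., that the optimal $(n+1)$-configuration and optimal $n$-configuration really do differ by one elementary split. This requires ruling out ``global rearrangements'': one must show that any candidate $(n+1)$-set not of the form $\ga_{n+1}(a(j))$ has strictly larger error, which in turn rests on the convexity-type inequalities of Lemma~\ref{lemma11} together with the observation (from Lemma~\ref{lemma10}) that the error contributions $\tilde E$ form a strictly decreasing sequence under refinement, so that any ``detour'' — refining a smaller cell while leaving a larger one intact — is strictly suboptimal. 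Carefully organizing this comparison, and handling the boundary between the $a(\go)$-type and $a(\go1,\go2)$-type gains, is where the real work of the proof lies; the algebraic identities themselves are routine given Lemmas~\ref{lemma10} and \ref{lemma11}.
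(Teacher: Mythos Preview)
Your approach---induction on $n$, with Propositions~\ref{prop1} and \ref{prop2} furnishing the base case and Lemma~\ref{lemma11} driving the inductive comparison---is essentially the paper's own. In fact your plan is more thorough than the paper's proof, which is quite brief: it simply notes that, by Lemma~\ref{lemma11}, splitting any $a(j)\notin W(\ga_n)$ yields strictly larger error than splitting an element of $W(\ga_n)$ and then concludes, without separately arguing the reverse inclusion or ruling out the global rearrangements you flag (these are taken to be handled implicitly by Proposition~\ref{prop4} and Note~\ref{note1}). One small arithmetic slip in your sketch: the second gain fraction should be $\tfrac{44}{47}$, not $\tfrac{45}{47}$, since $E(\go1)+E(\go2)=\tfrac{2}{80}E(\go)=\tfrac{3}{47}E(\go1,\go2)$.
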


\begin{proof}
By Proposition~\ref{prop1} and Proposition~\ref{prop2}, we know that the optimal sets of two- and three-means are $\{a(1,2), a(3)\}$ and
$\{a(1), a(2), a(3)\}$. Notice that by Lemma~\ref{lemma10}, we know $E(1, 2)=\frac {47}{18} E(3)>\frac {47}{96} E(3)$.
Hence, the theorem is true for $n=2$. For any $n\geq 2$, let us now assume that $\alpha_n$ is an optimal
set of $n$-means. Let $\ga_n:=\set{a(i) : 1\leq i\leq n}$. Let $\tilde  E(a(i))$ and $W(\ga_n)$ be defined as in the hypothesis. If $a(j) \not \in W(\ga_n)$, i.e., if  $a(j) \in \ga_n\setminus W(\ga_n)$, then by Lemma~\ref{lemma11}, the error
\[\sum_{a(i)\in (\ga_n\setminus \set{a(j)})}E(a(i))+E(\go1, \go2)+E(\go3) \te{ if } a(j)=a(\go),\]
or
\[\sum_{a(i)\in (\ga_n\setminus \set{a(j)})}E(a(i))+E(\go1)+E(\go2) \te{ if } a(j)=a(\go1, \go2),\]
obtained in this case is strictly greater than the corresponding error obtained in the case when $a(j)\in W(\ga_n)$. Hence, for any $a(j) \in W(\ga_n)$, the set $\ga_{n+1}(a(j))$, where
\[\ga_{n+1}(a(j)):=\left\{\begin{array}{ll}
(\ga_n\setminus \set{a(j)})\uu \set{a(\go1, \go2), a(\go3)} \te{ if } a(j)=a(\go), &\\
(\ga_n \setminus \set{a(j)})\uu \set{a(\go1), \, a(\go2)} \te{ if } a(j)=a(\go1, \go2).
\end{array}\right.
\] is an optimal set of $(n+1)$-means, and the number
of such sets is
\[\te{card}\Big(\UU_{\alpha_n \in \C{C}_n}\{\alpha_{n+1}(a(j)) : a(j) \in W(\ga_n)\}\Big).\]
Thus, the proof of the theorem is complete.
\end{proof}

\begin{remark}

Once an optimal set of $n$-means is known, by using \eqref{eq1}, the corresponding quantization error can easily be calculated.
\end{remark}

Using the induction formula given by Theorem~\ref{Th1}, we obtain some results and observations about the optimal sets of $n$-means, which are given in the following section.

\section{Some results and observations}
First, we explain some notations that we are going to use in this section. Recall that the optimal set of one-mean consists of the expected vector of the random vector $X$, and the corresponding quantization error is its variance. Let $\ga_n$ be an optimal set of $n$-means, i.e., $\ga_n \in \C C_n$, and then for any $a\in \ga_n$, we have  $a=a(\go)$, or $a=a(\go 1, \go 2)$ for some $\go \in I^k$, $k\geq 1$.
For any $n\geq 2$, if $\te{card}(\C C_n)=k$, we write
\[\C C_n=\left\{\begin{array}{ccc}
\set{\ga_{n, 1}, \ga_{n, 2}, \cdots, \ga_{n, k}} & \te{ if } k\geq 2,\\
\set{\ga_{n}} & \te{ if } k=1.
\end{array}\right.
\]
If $\te{card}(\C C_n)=k$ and  $\te{card}(\C C_{n+1})=m$, then either $1\leq k\leq m$, or $1\leq m\leq k$ (see Table~\ref{tab1}). Moreover, by Theorem~\ref{Th1}, an optimal set at stage $n$ can contribute multiple distinct optimal sets at stage $n+1$, and multiple distinct optimal sets at stage $n$ can contribute one common optimal set at stage $n+1$; for example from Table~\ref{tab1}, one can see that the number of $\ga_{12}=1$, the number of $\ga_{13}=4$, the number of $\ga_{14}=6$, the number of $\ga_{15}=4$, and the number of $\ga_{16}=1$.
\begin{table}[!ht]
\begin{center}
\begin{tabular}{ |c|c||c|c|| c|c||c|c|c||c|c||c|c}
 \hline
$n$ & $\te{card}(\C C_n) $ & $n$ & $\te{card}(\C C_n) $  & $n$ & $\te{card}(\C C_n)  $  & $n$ & $\te{card}(\C C_n)$   & $n$ & $\te{card}(\C C_n)$ & $n$ & $\te{card}(\C C_n)$\\
 \hline
5 & 1 & 18 &  4 &  31& 6 & 44 & 1& 57 & 495& 70 & 56 \\6 & 1 &  19 & 6 & 32 & 4 & 45& 8 & 58 & 792& 71 & 28\\7 & 2 & 20  & 4 & 33&  1& 46 & 28 & 59 & 924& 72 & 8\\8 & 1 &  21 &  1& 34 & 6& 47& 56 & 60&792 & 73 & 1\\9 & 1 & 22&  1 & 35 &  15& 48 & 70& 61 & 495 & 74 & 1\\10 & 2 & 23 & 6 & 36 & 20 & 49 & 56& 62 & 220& 75 & 12 \\11 & 1 & 24 &  15 & 37 & 15& 50  & 28& 63 & 66 & 76 & 66 \\12 & 1 & 25 & 20 & 38 & 6 & 51 & 8 & 64& 12& 77 & 220\\13 & 4 & 26 & 15 & 39& 1& 52 & 1& 65 & 1& 78 & 495 \\14 & 6 & 27& 6 & 40 & 1& 53 & 1& 66  & 8 & 79 & 792\\15 & 4 & 28&  1 & 41 &4 & 54 & 12& 67 & 28 & 80 & 924  \\16 & 1 &29 & 1 & 42 & 6 & 55 & 66& 68 & 56& 81 & 792 \\17 & 1 &30 & 4 & 43&  4 & 56 & 220 & 69 & 70 & 82 & 495\\
 \hline
\end{tabular}
 \end{center}
\caption{Number of $\ga_n$ in the range $5\leq n\leq 82$.}
    \label{tab1}
\end{table}
By $\ga_{n, i} \rightarrow \ga_{n+1, j}$, it is meant that the optimal set $\ga_{n+1, j}$ at stage $n+1$ is obtained from the optimal set $\ga_{n, i}$ at stage $n$, similar is the meaning for the notations $\ga_n\rightarrow \ga_{n+1, j}$, or $\ga_{n, i} \rightarrow \ga_{n+1}$, for example from Figure~\ref{Fig4}:
 \begin{align*} &\left\{\alpha _{12}\to \alpha _{13,1},\alpha _{12}\to \alpha _{13,2},\alpha _{12}\to \alpha _{13,3},\alpha _{12}\to \alpha _{13,4}\right\},\\
 &\{\left\{\alpha _{13,1}\to \alpha _{14,1},\alpha _{13,1}\to \alpha _{14,2},\alpha _{13,1}\to \alpha _{14,4}\right\},\left\{\alpha _{13,2}\to \alpha _{14,1},\alpha _{13,2}\to \alpha _{14,3},\alpha _{13,2}\to \alpha _{14,5}\right\},\\
 &\left\{\alpha _{13,3}\to \alpha _{14,2},\alpha _{13,3}\to \alpha _{14,3},\alpha _{13,3}\to \alpha _{14,6}\right\},\left\{\alpha _{13,4}\to \alpha _{14,4},\alpha _{13,4}\to \alpha _{14,5},\alpha _{13,4}\to \alpha _{14,6}\right\}\}.
 \end{align*}
Moreover, we see that
\begin{align*}
\ga_6&=\{a(1), a(2), a(31), a(32), a(333), a(331, 332) \}  \te{ with } V_6=\frac{3537}{563200}\approx0.00628018;\\
\ga_{7, 1}&=\{a(1), a(23), a(21, 22), a(31), a(32), a(333), a(331, 332) \};\\
\ga_{7, 2}&=\{a(13), a(11, 12),  a(2), a(31), a(32), a(333), a(331, 332) \}\\
& \qquad \qquad \te{ with }  V_7=\frac{1521}{281600}\approx0.00540128;\\
\ga_{8} &= \{a(13), a(11, 12),  a(23), a(21, 22), a(31), a(32), a(333), a(331, 332) \} \\
& \qquad \qquad  \te{ with } V_8=\frac{2547}{563200}\approx0.00452237; \\
\ga_9&= \{a(13), a(11, 12),  a(23), a(21, 22), a(31), a(32), a(333), a(331), a(332) \}  \\
& \qquad \qquad \te{ with } V_9=\frac{9171}{2816000}\approx0.00325675; \\
\ga_{10, 1}&=\{a(13), a(11, 12),  a(23), a(21), a(22), a(31), a(32), a(333), a(331), a(332)\}; \\
\ga_{10, 2}&=\{a(13), a(11), a(12),  a(23), a(21, 22), a(31), a(32), a(333), a(331), a(332) \}  \\
& \qquad \qquad \te{ with } V_{10} =\frac{ 7191}{2816000} \approx0.00255362;\\
\ga_{11} &=\{a(13), a(11), a(12),  a(23), a(21), a(22), a(31), a(32), a(333), a(331), a(332) \}  \\
&\qquad \qquad \te{ with }  V_{11}= \frac{5211}{2816000}\approx0.0018505;
\end{align*}
and so on.
% \begin{table}[!ht]
% \begin{center}
% \begin{tabular}{ |c|c||c|c|| c|c||c|c|c||c|c||c|c}
%  \hline
% $n$ & $\te{card}(\C C_n) $ & $n$ & $\te{card}(\C C_n) $  & $n$ & $\te{card}(\C C_n)  $  & $n$ & $\te{card}(\C C_n)$   & $n$ & $\te{card}(\C C_n)$ & $n$ & $\te{card}(\C C_n)$\\
%  \hline
% 5 & 1 & 18 &  4 &  31& 6 & 44 & 1& 57 & 495& 70 & 56 \\6 & 1 &  19 & 6 & 32 & 4 & 45& 8 & 58 & 792& 71 & 28\\7 & 2 & 20  & 4 & 33&  1& 46 & 28 & 59 & 924& 72 & 8\\8 & 1 &  21 &  1& 34 & 6& 47& 56 & 60&792 & 73 & 1\\9 & 1 & 22&  1 & 35 &  15& 48 & 70& 61 & 495 & 74 & 1\\10 & 2 & 23 & 6 & 36 & 20 & 49 & 56& 62 & 220& 75 & 12 \\11 & 1 & 24 &  15 & 37 & 15& 50  & 28& 63 & 66 & 76 & 66 \\12 & 1 & 25 & 20 & 38 & 6 & 51 & 8 & 64& 12& 77 & 220\\13 & 4 & 26 & 15 & 39& 1& 52 & 1& 65 & 1& 78 & 495 \\14 & 6 & 27& 6 & 40 & 1& 53 & 1& 66  & 8 & 79 & 792\\15 & 4 & 28&  1 & 41 &4 & 54 & 12& 67 & 28 & 80 & 924  \\16 & 1 &29 & 1 & 42 & 6 & 55 & 66& 68 & 56& 81 & 792 \\17 & 1 &30 & 4 & 43&  4 & 56 & 220 & 69 & 70 & 82 & 495\\
%  \hline
% \end{tabular}
%  \end{center}
%  \
% \caption{Number of $\ga_n$ in the range $5\leq n\leq 82$.}
%     \label{tab1}
% \end{table}

\begin{remark}
By Theorem~\ref{Th1}, we see that to obtain an optimal set of $(n+1)$-means, one needs to know an optimal set of $n$-means. Unlike the probability distribution supported by the classical stretched Sierpi\'{n}ski triangle (see \cite{CR2}), for the probability distribution supported by the nonuniform stretched Sierpi\'{n}ski triangle considered in this paper, to obtain the optimal sets of $n$-means a closed formula is not known yet.
\end{remark}

\section*{Declaration}

\noindent
\textbf{Conflicts of interest.} We do not have any conflict of interest.\\
\\
\noindent
\textbf{Data availability:} No data were used to support this study.\\
\\
\noindent
\textbf{Code availability:} Not applicable\\
\\
\noindent
\textbf{Authors' contributions:} Each author contributed equally to this manuscript.

%%%%%%%%%%%%%%%%%%%%%%%%%%%%%%%%%%%%%%%%%%%%%%%%%%%%%%
%          7. REFERENCES SECTION
%%%%%%%%%%%%%%%%%%%%%%%%%%%%%%%%%%%%%%%%%%%%%%%%%%%%%%

%       READ THIS SECTION CAREFULLY

% Each of the references below MUST be cited in your article above. Do not include references that are not cited in your article.

% Follow the examples below carefully. We strongly suggest that you copy and paste your reference information directly into our examples.

% List all references in alphabetical order according to the first author's last name.

% Verify each URL works correctly and can be accessed properly. Your URL links should be to reputable websites. The command line for a website link begins with: \url{ }

% Do not add MR or DOI numbers to your references. AIMS production staff will add this information.

% Using BibTex is not recommended but can be handled.

\bibliographystyle{abbrv}
\bibliography{References}

\medskip
% The information below will be filled in by AIMS production staff.
Received xxxx 20xx; revised xxxx 20xx; early access xxxx 20xx.
\medskip

\end{document}